\documentclass{llncs}

\usepackage{graphicx}
\usepackage{amsmath,amssymb}
\usepackage{graphicx,subfigure,tabularx,multirow}
\usepackage{times,verbatim}
\usepackage{enumerate,url,color}
\usepackage[bottom]{footmisc}
\usepackage{hyperref}
\usepackage[labelfont=bf,size=small]{caption}
\urlstyle{same}

\newcommand{\blink}[1]{\textnormal{\texttt{#1}}}

\newcommand{\NPC}[0]{\blink{NP}-complete}
\newcommand{\NPH}[0]{\blink{NP}-hard}
\newcommand{\Frechet}[0]{Fr\'{e}chet}

\newcommand{\SCoP}{\textsc{SCP}}
\newcommand{\EDP}{\textsc{Disjoint Paths}}

\newcommand{\alg}[1]{\textsc{#1}}

\author{
L.~De La Cruz$^1$ \and
S.~Kobourov$^1$ \and
S.~Pupyrev$^{1,2}$ \and
P.~Shen$^1$ \and
S.~Veeramoni$^{1}$
}

%\authorrunning{L.~De~La~Cruz et al.}

\institute{
$^1$ Department of Computer Science, University of Arizona\\
$^2$ Institute of Mathematics and Computer Science, Ural Federal University
}

\title{Computing Consensus Curves}

%%%%%%% PAGE NUMBERING AND LARGE FIGURES AND TABLES

%%%%%%%%%%%%%%%%%%%%%%%%%%%%%%%%%%%%%%%%%%%%%%%%%%%

\begin{document}

\maketitle

\begin{abstract}
We study the problem of extracting accurate average ant
trajectories from many (inaccurate) input trajectories
contributed by citizen scientists. Although there are many generic
software tools for motion tracking and specific ones for insect
tracking, even untrained humans are better at this task.
We consider several local (one ant at a time) and global (all ants
together) methods. Our best performing algorithm uses a novel global
method, based on finding edge-disjoint paths in a graph constructed from
the input trajectories. The underlying optimization problem is a new
and interesting network flow variant.
Even though the problem is NP-complete, two heuristics work well in practice,
outperforming all other approaches,
including the best automated system.
\end{abstract}

\section{Introduction}
\label{sec:intro}
Tracking moving objects in video is a difficult task to automate. Despite advances in
machine learning and computer vision, the best way to accomplish such
a task still is by hand. At the same time, people spend millions of hours each day
playing games like \emph{Solitaire}, \emph{Angry Birds}, and \emph{Farmville} on phones and
computers. This presents
an opportunity to harness some of the time people spend on online
games for more productive but still
enjoyable work.
In the last few years it has been shown that citizen
scientists can contribute to image processing tasks, such as the
\emph{Galaxy Zoo} project in which thousands of citizen
scientists labeled millions of images of galaxies from
the Hubble Deep Sky Survey and \emph{FoldIt} in
which online gamers helped to decode the structure of an AIDS protein~---
a problem which stumped researchers for 15 years.

\emph{AngryAnts} is our online game available at \url{http://angryants.cs.arizona.edu},
which plays videos and allows citizen scientists to build the
trajectory of a specified ant via mouse clicks;
see Fig.~\ref{fig:antcolony}.
When we have enough data, we compute a
most realistic {\em consensus} trajectory for each ant.
Our motivation comes from biologists who wish to discover
longitudinal behavioral patterns in ant colonies.
The trajectories of individual ants in a colony extracted from videos
are needed to answer
questions such as how often do ants communicate, what different roles
do ants play in a colony, and how does interaction and communication affect
the success or failure of a colony? Existing automated solutions are
not good enough, and there is
only so much data that even motivated students can annotate in the
research lab.

\begin{figure}[tb]
\centering
\includegraphics[height=6cm]{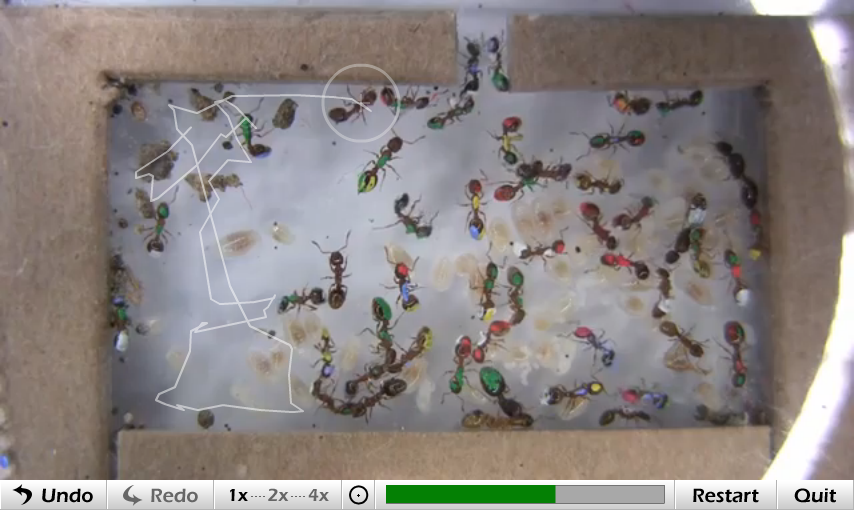}
\caption{\small A snapshot of the game environment with the selected ant in the circle.}
\label{fig:antcolony}
\end{figure}

{\bf Related Work:}
The problem of computing the most likely trajectory from a set of given trajectories has been studied in many different contexts and here we mention a few examples, which are similar to our approach.
Buchin {\em et~al.}~\cite{bbklsww-mt-10} look for
a representative trajectory for a given set of trajectories and
compute a median representative rather than the mean.
The \Frechet{} distance as similarity measure for
trajectories is studied by Buchin {\em et~al.}~\cite{DBLP:journals/gis/BuchinBG10}, who show how to
incorporate time-correspondence and directional constraints.
Trajcevski {\em et~al.}~\cite{DBLP:conf/gis/TrajcevskiDSTV07} use the
maximum distance at corresponding times as a measure of similarity between
pairs of trajectories, and describe algorithms for %optimal
matching under rotations and translations.

Yilmaz {\em et~al.}~\cite{Yilmaz:2006:OTS:1177352.1177355} survey
the state-of-the-art in object tracking methods. Some of the most
recent methods include general approaches for tracking
cells undergoing collisions by Nguyen {\em et~al.}~\cite{5779709} and
specific approaches for tracking insects by
Fletcher {\em et~al.}~\cite{5711555}.
Also related are the automatic tracking
method for tracking bees by Veeraraghavan~{\em et~al.}~\cite{10.1109/TPAMI.2007.70707}
and cluster-based, data-association
approaches for tracking bats in infrared video by
Betke~{\em et~al.}~\cite{4270019}. Tracking the motion and interaction
of ants has also been studied by
Khan~{\em et~al.}~\cite{1512059}, who describe probabilistic
methods, and by Maitra~{\em et~al.}~\cite{5403051} using
computer vision techniques.

{\bf Our Contributions:}
We describe a citizen science approach for extracting
consensus trajectories in an online game setting; see
Fig.~\ref{fig:flow} for an overview.
Combining human-generated trajectories with a new global approach for
computing consensus curves outperforms
even the most sophisticated and computationally expensive tracking algorithms~\cite{Poff12}, even in the
most advantageous setting for automated solutions (e.g., high
resolution video, sparse ant colony, individually painted ants).

Consider a {\em trajectory} as a sequence of
$T$ pairs $(p_i,t_i), 1 \le i \le T$, where $p_i=(x_i, y_i)$ is a
point in the plane representing the position of an ant at timestamp $t_i$.
We assume that between timestamps an ant keeps a constant
velocity, and therefore, its trajectory is a polyline in 3D (or a possibly
self-intersecting polyline in 2D).
The input of our problem is a collection $\tau_1, \dots, \tau_m$ of
trajectories. Each trajectory corresponds to one of $k$ ants, and we
assume that there exists at least one trajectory for each ant, that
is, $m\ge k$. Since the input data comes from the \emph{AngryAnts} game, all
trajectories have the same length (number of points).
We also assume that the initial position of each ant is provided by
the game (from a different game level called ``Count the Ants'' where
users click on all the ants in the first frame of the video to identify their starting
positions). Therefore, the first points of trajectories corresponding
to the same ant are identical. Our goal is to compute a {\em consensus
trajectory}, that is, our best guess for the actual route taken, for
each of the $k$ ants. While intuitively we are looking for the most probable ant trajectories,
it is far from obvious how to measure the quality of a solution.

\begin{figure}[t]
\centering
\includegraphics[height=5.5cm]{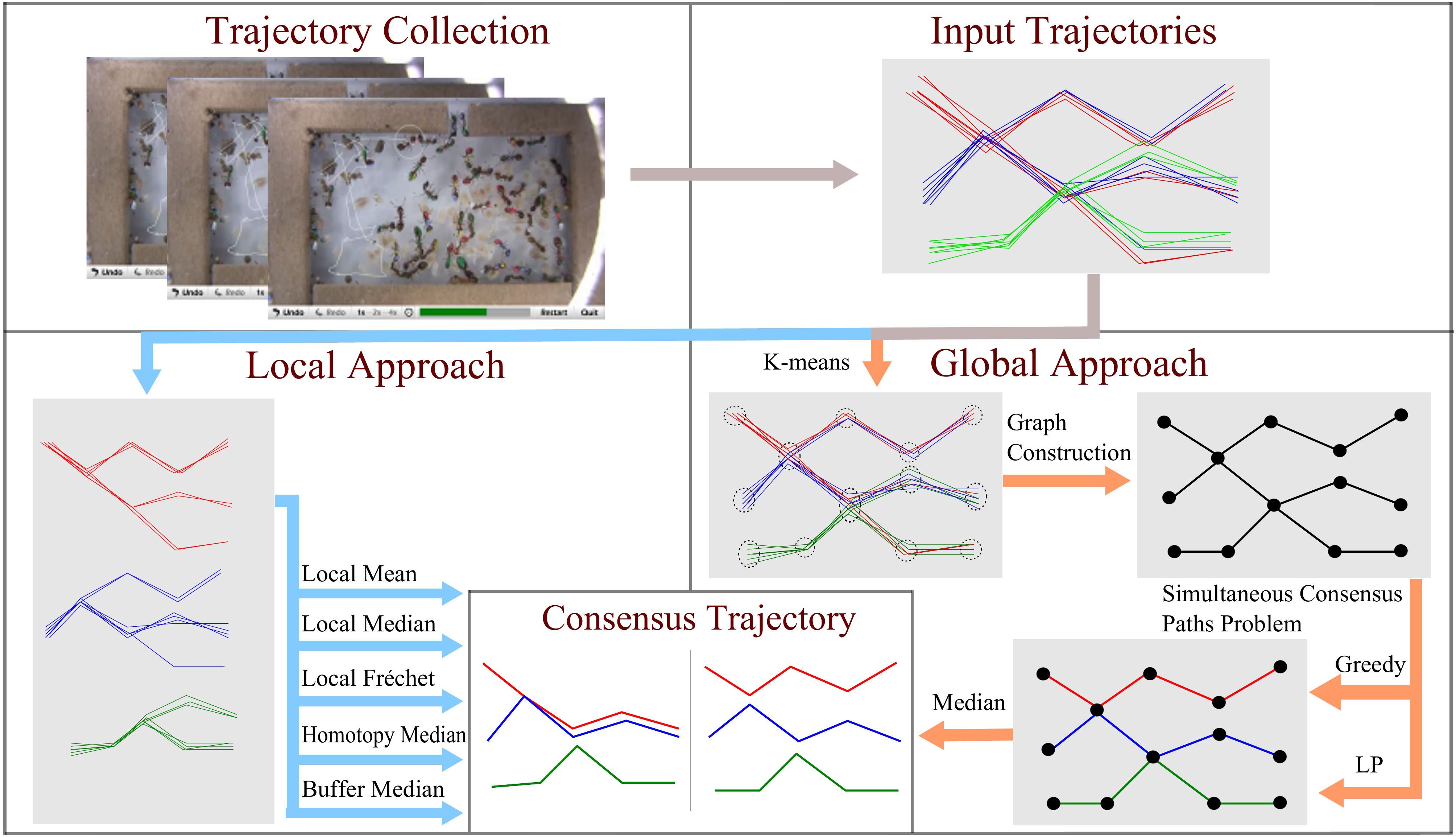}
\caption{Overview of consensus trajectory computation via citizen
science. Local consensus may route different ants along  the
same trajectory (red and blue), while the global consensus ensures
disjoint trajectories.}
\label{fig:flow}
\end{figure}
We designed, implemented, and evaluated several methods for
computing accurate consensus trajectories from many (possibly
inaccurate) trajectories submitted by citizen scientists. We
consider several local strategies (clustering, median
trajectories, and \Frechet{} matching) in one-ant-at-a-time setting. We
also designed and implemented
a novel global
method in the all-ants-together setting. This approach is
based on finding edge-disjoint paths in a graph
constructed from all input trajectories. The underlying optimization
problem is a new and interesting variant of network flow. Even though
the problem is \NPC{}, our two heuristics work
well in practice, outperforming all other approaches including the
automated system.

\section{The Local Approach}
\label{sect:local}

\noindent{\bf Local Mean:}
Intuitively, a mean trajectory is the one that averages locations for
input trajectories. To compute the mean, we identify all input
trajectories $\tau_1, \dots, \tau_{m_c}$ for a particular ant $c$. For
each timestamp $t_i$, we query points
$p_1=(x_1, y_1), \dots, p_{m_c}=(x_{m_c}, y_{m_c})$ corresponding to
$t_i$. The average point is $(x_{avg}, y_{avg})$,
where $x_{avg}=(x_1 + \dots + x_{m_c})/m_c$ and $y_{avg}=(y_1 + \dots
+ y_{m_c})/m_c$ gives the location of the mean
trajectory at timestamp $t_i$. The sequence of these average points
over time defines the local mean consensus trajectory, which
 is good when the number of input trajectories is
very large or when all input trajectories are very accurate. In
reality, however, this is often not the case; a single
inaccurate input trajectory may greatly influence the result.
\medskip

\noindent{\bf Local Median:}
The median point is more robust to outliers than the mean.
For a set of points $p_1,p_2, \dots, p_{m_c}$
the median is
the point $(x_{med}, y_{med})$, where $x_{med}$ is the median of the
array $x_1, \dots, x_{m_c}$ and $y_{med}$ is
the median of the array $y_1, \dots, y_{m_c}$. The sequence of these
average points over time defines the local median consensus trajectory.
Note that the median of a set of points is not necessarily a point of
the set: it could place an ant at a point that matches none of the input trajectories.

\medskip

\noindent{\bf Local \Frechet{}:}
Informally, the \Frechet{} distance between two trajectories can be illustrated as the
minimum dog-leash distance that allows a
man to walk along one trajectory and his dog along the other, while connected at all times by the leash~\cite{DBLP:journals/ijcga/AltG95}. Computing
the \Frechet{} distance produces an alignment of the trajectories:
at each step, the position of the man is mapped to the position of the
dog. Given the \Frechet{} alignment of two trajectories, we compute their
consensus by taking the midpoint of the leash over time.
To find the consensus for all input trajectories, we repeatedly
compute pairwise consensuses until only one trajectory remains.
%, similar
% to~\cite{DBLP:conf/cccg/DumitrescuR04}.
Since the results of the algorithm depends on the order in which the
trajectories are merged, we try several different random orders
choosing the best result. Note that, by definition, the
\Frechet{} alignment of trajectories uses only the order of points and
ignores the timestamps that are an essential feature of our input.

\begin{figure}[t]
\center
	\subfigure[]{
    \includegraphics[height=1.6cm]{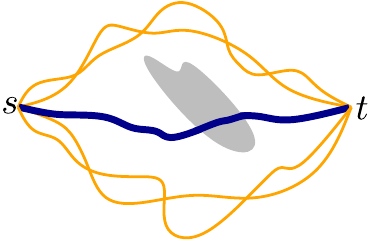}
    \label{fig:median1}}
	\subfigure[]{
    \includegraphics[height=1.6cm]{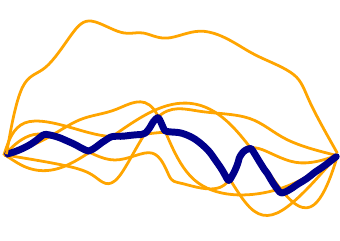}
    \label{fig:median2}}
	\subfigure[]{
    \includegraphics[height=1.6cm]{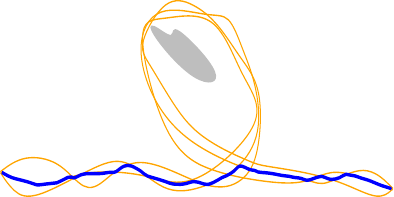}
    \label{fig:median3}}
	\subfigure[]{
    \includegraphics[height=1.6cm]{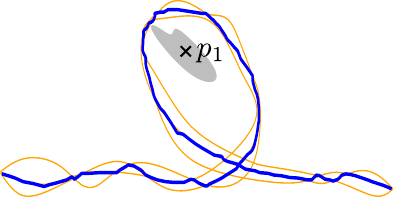}
    \label{fig:median4}}
\caption{\small (a)~~Simple average of trajectories may result in a consensus
 that goes through an obstacle.
(b)~The median trajectory always follows some piece of input
trajectory and is robust to outliers.
(c)~The simple median algorithm might make a shortcut and miss a self-
intersecting loop.
(d)~Homotopy median with an added obstacle to avoid short cutting.}
\end{figure}

\medskip\noindent{\bf Homotopy Median and Buffer Median:}
In the above approaches, the average of two locations could be an
invalid location; see Fig.~\ref{fig:median1}.
In the median trajectory approaches, the computed trajectory always
lies on segments of input curves; see Fig.~\ref{fig:median2}. Note that
this is more restrictive than the point-based median method described above.
We use two algorithms suggested by Buchin~{\em et al.}~\cite{bbklsww-mt-10} and
Wiratma~\cite{wiratma10}.
In the \emph{homotopy median} algorithm, additional obstacles are placed in
large faces bounded by segments
of input trajectories to ensure that the median
trajectory is homotopic to the set of trajectories that go around
the obstacles. Hence, the median does not miss segments if the input
trajectories are self-intersecting; see Fig.~\ref{fig:median3} and
Fig.~\ref{fig:median4}.
The \emph{buffer median} algorithm is a combination of the buffer
concept and Dijkstra's shortest path algorithm. A buffer is defined around a
segment so that if the segment is a part of the
median trajectory, then its buffer intersects all input trajectories. Thus,
segments located near the middle of the trajectory have smaller
buffer size and are good candidates for the median
trajectory.
Note again that the homotopy median and buffer median approaches ignore the
timestamps that are an essential feature of our input.

\section{The Global Approach}
\label{sect:global}

In the global approach we consider all input trajectories for all $k$
ants together. The main motivation is that a trajectory corresponding
to an ant may contain valid pieces of trajectories for other ants:
a citizen scientist may mistakenly switch from tracking
an ant $x$ to tracking a different ant $y$ at an intersection point where
$x$ and $y$ cross paths. However, even when such mistakes occur, the
trace after the intersection point is still useful as it contains a
part of the trajectory of ant $y$. The global approach allows us to
retain this possibly useful data as shown by an example in
Fig.~\ref{fig:flow}.
Given a set of input trajectories, we compute the consensus
trajectories in three steps: (1)~create a graph $G$; (2)~compute edge-disjoint
paths in $G$; (3)~extract consensus trajectories from the paths.

\noindent{\bf Step 1}: We begin by creating a graph that
models the interactions between ants in the video. For every timestamp,
the graph has at most $k$ vertices, which correspond to the
positions of the $k$ ants. If several ants are located close to each
other, then
we consider them to be at the same vertex. Intuitively, each such vertex is
a possible point for a citizen scientist to switch to a wrong ant.

A weighted directed graph $G=(V,E)$ is constructed as follows. For
every timestamp $t_i$, we extract points $p_1,\dots,p_m$ from the
given trajectories, where $p_j$ is the position of trajectory $\tau_j$
at $t_i$. Using a modification of the $k$-means clustering algorithm~\cite{lloyd82}, we
partition the points into $\le k$ clusters.
Our clustering algorithm
differs from the classical $k$-means in that it always merges the points
into one cluster located closer than $50$ pixels from each other.
The vertices of $G$ are the clusters for all
timestamps; thus, $G$ has at most $kT$ vertices. We then add edges
between vertices in consecutive timestamps. Let $V_i$ represent a set
of vertices at timestamp $t_i$. We add an edge $(u,v)$ between two
vertices $u\in V_i$ and $v\in V_{i+1}$ if there is an input trajectory
with point $p_i$ belonging to cluster $u$ and point $p_{i+1}$
belonging to cluster $v$. For each edge, we create $k$
non-negative weights. For each ant $1\le x \le k$ and for each edge
$(u,v) \in E, u \in V_i, v \in V_{i+1}$, there is weight $w_{uv}^x\in
\mathbb{Z}_{\ge0}$, which equals to the number of trajectories between timestamps
$i$ and $i+1$ associated
with the ant $x$ passing through the clusters $u$ and $v$.

Note that by construction graph $G$ is acyclic. Let $d^{in}(v)$ and
$d^{out}(v)$ denote the indegree and the outdegree of the vertex $v$.
Clearly, the only vertices with $d^{in}(v)=0$ are in $V_1$, and the
only vertices with $d^{out}(v)=0$ are in $V_T$; we call them
\emph{source} and \emph{destination} vertices and denote them by $s_i$ and $t_i$, respectively. We assume that
for all \emph{intermediate} vertices $v\in V_i, 1<i<T$, we have
$d^{in}(v)=d^{out}(v)$; this is a realistic assumption
because for each intersection point of trajectories,
the number of incoming ants equals the number of outgoing ants.
We say that a directed graph satisfies the {\em ant-conservation
condition} if (1) the number of outgoing edges from sources and the
number of incoming edges to destinations is equal to $k$, that is,
$\sum_i d^{out}(s_i) = \sum_i d^{in}(t_i) = k$, and
(2) indegree and outdegree of all its intermediate vertices
are the same, that is, $d^{in}(v)=d^{out}(v)$.

\noindent{\bf Step 2:} We compute $k$ edge-disjoint paths in $G$,
corresponding to the most ``realistic'' trajectories of the ants. The
paths connect the $k$ distinct vertices in $V_1$ to the $k$ distinct
vertices in $V_T$. Since the initial position of each ant is a part of
input, we know the starting vertex of each path. However, it is not
obvious which of the destination vertices in $V_T$ correspond to each
ant. To measure the quality of the resulting ant
trajectories, that is, how well the edge-disjoint paths match the
input trajectories, we introduce the following optimization problem.

\noindent\textbf{\textsc{Simultaneous Consensus Paths} (SCP)}\\
\indent\emph{Input}: A directed acyclic graph $G=(V,E)$ with $k$
sources $s_1,\dots,s_k$ and $k$ destinations $t_1,\dots,t_k$
satisfying the ant-conservation condition.
%such that
%$d^{in}(s_i)=0, d^{out}(t_i)=0$ for $1\le i \le k$ and
%$d^{in}(v)=d^{out}(v)$ for $v\neq s_i,t_i$.
The weight of an edge $e\in E$ for $1\le i \le k$ equals $w_e^i \in
\mathbb{Z}_{\ge 0}$.\\
\indent\emph{Task}: Find $k$ edge-disjoint paths $P_1,\dots,P_k$ so
that path $P_i$ starts at $s_i$ and ends at $t_j$ for some $1\le j \le k$,
and the total
$cost=\sum_{i=1}^k \sum_{e\in P_i} w_e^i$ is maximized. Note that the
objective is to simultaneously optimize all $k$ disjoint paths.
The decision problem is to find $k$ edge-disjoint paths with total $cost \ge c$ for some constant $c \ge 0$.

The problem is related to the integer multi-commodity flow problem,
which is known to be \NPH{}~\cite{Even75}. In our setting, the weights
on edges for different ``commodities'' are different, and the
source-destination pairs are not known in advance.
We study the \SCoP{} problem in the next section.

\noindent{\bf Step 3:}
We construct consensus trajectories corresponding to a solution of
the \SCoP{} problem. Let $P$ be a path for an ant $x$ computed
in the previous step. For each timestamp $t_i$, we consider the edge
$(u,v)\in P, u\in V_i, v\in V_{i+1}$, and find a set $S_{uv}$ of all
input trajectories passing through both clusters $u$ and $v$. We
emphasize here that $S_{uv}$ may contain (and often does contain)
trajectories corresponding to more ants than just ant $x$. Next we
compute the median of points of $S_{uv}$ at timestamp $t_i$; the
median is used as the position of ant $x$ at timestamp $t_i$. The
resulting trajectory of $x$ is a polyline connecting its positions for
all $t_i, 1\le i \le T$.

\section{The \textsc{Simultaneous Consensus Paths} Problem}
\label{sect:antprob}

The \SCoP{} problem is \NPC{} even when restricted to planar graphs by a reduction from a
variant of the edge-disjoint path problem~\cite{Marx04}.
On the other hand, the problem is fixed-parameter tractable in the number of paths.
For a fixed constant $k$, it can be solved optimally via dynamic programming
in time $O(|E| + k!|V|)$.

\subsection{Hardness Result}

Let a \emph{grid graph} be one that is a subgraph of the
rectangular grid, that is, a graph with $n\times m$ vertices such that
$v_{i,j}$ is connected to $v_{i',j'}$ if and only if $|i-i'|=1$ and
$j=j'$, or $i=i'$ and $|j-j'|=1$. It is easy to see that any grid
graph is planar. We prove that the \SCoP{} problem is \NPC{} even when
restricted to grid graphs by
a reduction from a variant of the edge-disjoint path problem, which is
formulated as follows.

\medskip

\noindent\textbf{\EDP{}}\\
\indent\emph{Input}: A directed acyclic grid graph $G$ satisfying the
ant-conservation condition and a set of $k$ source-destination pairs
$(s_1,t_1),\dots,(s_k,t_k)$. \\ \indent\emph{Task}: Find $k$
edge-disjoint paths $P_1,\dots,P_k$ so that $P_i$ starts at $s_i$ and
ends at $t_i$.

There are two differences with the \SCoP{} problem: (1) in the
\EDP{} problem, a source-destination pair is fixed for every path; (2)
the goal is to construct $k$ paths, while the \SCoP{}
problem is an optimization of a weighted sum.
Deciding whether the \EDP{} problem has a solution
with exactly $k$ paths is \NPC{}~\cite{Marx04}.
We also note that the
graphs used in the construction of the proof satisfy the ant-conservation
condition.

\begin{theorem}
\label{thm:npc}
The \SCoP{} problem is \NPC{} on acyclic directed grid graphs.
\end{theorem}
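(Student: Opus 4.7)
The plan is to establish \NP{} membership by a routine witness and to prove hardness by a polynomial reduction from \EDP{} on directed acyclic grid graphs, which the paper notes is already known to be \NPH{} on ant-conservation-satisfying grid DAGs via Marx~\cite{Marx04}. For membership I would guess the $k$ paths, verify edge-disjointness and that each $P_i$ starts at $s_i$, and compute $\sum_i \sum_{e\in P_i} w_e^i$ to compare against the threshold $c$; all this is polynomial in the input size. The essential gap the reduction has to bridge is that \EDP{} fixes a source--destination pairing $(s_i,t_i)$, whereas \SCoP{} only fixes sources, so the per-ant weights $w_e^i$ must be chosen to force each path from $s_i$ to terminate exactly at the ``intended'' destination in every optimal solution.

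Concretely, given an \EDP{} instance $(G,(s_1,t_1),\ldots,(s_k,t_k))$, I would construct a new grid DAG $G'$ by appending to each original destination $t_i$ a single new ``tag'' vertex $t_i^*$ together with one new directed edge $e_i=(t_i,t_i^*)$. The sources of $G'$ are the original sources $s_1,\ldots,s_k$, and its destinations are $t_1^*,\ldots,t_k^*$. All old edges receive weight zero for every ant, $w_e^i=0$, while on the tag edges I place exactly one unit of reward for the ``correct'' ant: $w_{e_i}^i=1$ and $w_{e_i}^j=0$ for $j\ne i$. Under this weighting, the cost of any feasible \SCoP{} solution on $G'$ equals the number of paths $P_i$ that end at $t_i^*$, so the decision question ``is the optimum \SCoP{} cost at least $k$?'' is exactly the question ``does $(G,(s_i,t_i)_i)$ admit $k$ edge-disjoint paths realizing the prescribed pairing?'' Both directions are immediate: any \EDP{} solution extended along the $e_i$'s yields an \SCoP{} solution of cost $k$, and any \SCoP{} solution of cost $k$ must send each $P_i$ through $e_i$, which after deleting the tags restricts to an \EDP{} solution.

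The step I expect to be the main obstacle is keeping $G'$ a directed acyclic subgraph of the rectangular grid while preserving the ant-conservation condition. Each former destination $t_i$ gains one outgoing edge, so I need Marx's grid instances to place every $t_i$ with in-degree one, restoring $d^{in}(t_i)=d^{out}(t_i)=1$ at the new intermediate; the excerpt's remark that Marx's graphs already satisfy ant-conservation makes this assumption natural. Each new $t_i^*$ then has in-degree one and out-degree zero, and the totals $\sum_i d^{out}(s_i)=\sum_i d^{in}(t_i^*)=k$ are preserved. For the grid-subgraph property I would exploit that the $t_i$'s in the Marx reduction may be taken on the boundary of the enclosing rectangle, so each $t_i^*$ fits at the adjacent lattice position just outside that boundary; padding the bounding box by one extra row or column along the affected sides keeps every new edge axis-aligned, unit-length, and consistently oriented with the acyclic order, so $G'$ remains a grid DAG and the reduction is complete.
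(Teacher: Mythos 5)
Your reduction is correct and follows essentially the same strategy as the paper's proof: membership via a polynomial-time witness check, and hardness by reducing from \EDP{} on ant-conservation grid DAGs~\cite{Marx04}, forcing each path from $s_i$ to its intended destination by appending ant-specific weighted edges at the $t_i$'s. Your gadget is in fact a clean simplification of the paper's --- a single unit-weight tag edge $(t_i,t_i^*)$ per destination with all other weights zero and threshold $k$, in place of the paper's length-$k|V|$ tails, auxiliary paths $R_i$ carrying weight inside $G$, and threshold $k^2|V|$ --- which lets you dispense with the paper's counting argument, and you handle the grid-embedding and ant-conservation bookkeeping at least as carefully as the original.
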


\begin{proof}
It is easy to verify a solution of \SCoP{} in polynomial time;
we argue the hardness result via standard reduction.
Let $G=(V,E)$, $(s_1,t_1),
\dots, (s_k,t_k)$ be an instance of the \EDP{} problem. We create a
new directed acyclic graph $G'$ satisfying the ant-conservation
condition. Then we assign weights for the edges of $G'$ so that an
optimal solution for the \SCoP{} problem on $G'$ connects as many
pairs $(s_i,t_i)$ as possible.
$G'$ is constructed from $G$ by adding extra
$k$ sources and $k$ paths of length $k|V|$, which we refer to
as tails; see Fig.~\ref{fig:gprime}. In the new graph $V'=V \cup
\{s_1',\dots,s_k'\} \cup T_1 \cup \dots \cup T_k$, where every tail
$T_i = \{t_{i_1}, t_{i_2}, \dots, t_{i_{k|V|}}=t_i'\}$. Edges in $G'$
connect the new sources and the tails with the vertices of $G$:
$$E'=E \cup \bigcup_{i=1}^k (s_i',s_i) \cup \bigcup_{i=1}^k \{(t_i
,t_{i_1}),(t_{i_1} ,t_{i_2}), \dots,(t_{i_{k|V|-1}},t_{i_{k|V|}})\}.$$
Every maximal path (in terms of edges) in $G'$ has its dedicated source, and no two
edge-disjoint paths end at the same destination. It is easy to see
that $G'$ is acyclic and satisfies the ant-conservation condition.

In the \SCoP{} problem, the paths do not have fixed destinations. To
make sure that a path starts at $s_i'$ and ends at $t_i'$, assign
heavy weights to the edges on the tail of the $i$-th path:
For every $1\le i \le k$, choose some
path $R_i$ from $s_i'$ to $t_i'$. For all edges $e\in R_i$, set
$w^i_e = 1$, and for the remaining edges, $e\not\in R_i$, set $w^i_e
= 0$.
 Note that these paths $R_i$ need not be disjoint; we use them only to
assign the edge weight; see Fig.~\ref{fig:gprime2}.

\begin{figure}[h!]
    \center
	\subfigure[]{
    \includegraphics[height=2.6cm]{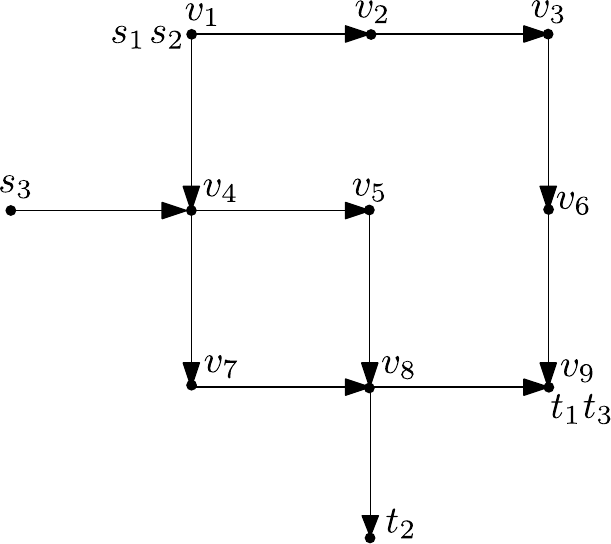}}
	\subfigure[]{
    \includegraphics[height=2cm]{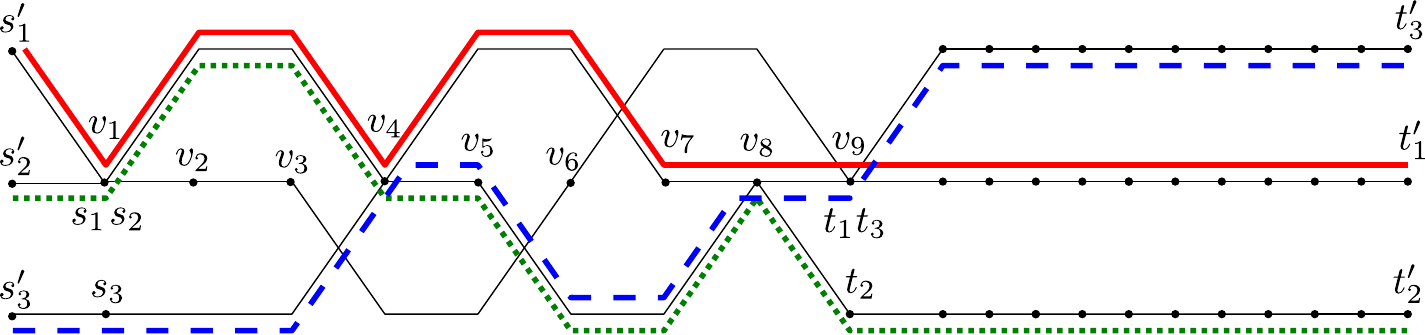}
    \label{fig:gprime2}}
    \caption{\small (a)~The input grid graph $G$ for the \EDP{} problem with
    3 source-destination pairs.
    (b)~Construction of the graph $G'$. Paths $R_1$, $R_2$, and $R_3$
    are colored red (solid), green (dotted), and blue (dashed), respectively.
    The tail connecting $t_i$ with $t_i'$ for $1\le i\le 3$ forces a
    path $P_i$ to end at $t_i$ in the optimal
    solution of the \SCoP{} problem.}
    \label{fig:gprime}
\end{figure}

Now we prove that there are $k$ edge-disjoint paths in $G$ if and only
if there is a solution of the \SCoP{} problem for $G'$ with $cost \ge
k^2|V|$. If there exist $k$ edge-disjoint paths in $G$ connecting
$s_i$ to $t_i$, then we can use these paths to construct heavy paths in
$G'$. Let $P_i=s_i\dots t_i$ be a path in $G$. Path $P_i'$ is
constructed from $P_i$ by adding a source and a tail:
$P_i'=s_i's_i\dots t_i\dots t_i'$. Since $w^i_e=1$ for the edges $e$
than $k|V|$. Paths $P_i'$ form a solution for the problem with
$cost\ge k^2|V|$.

Conversely, suppose the edge-disjoint paths $P_i'$ are a solution for
the \SCoP{} problem with $cost\ge k^2|V|$. We show that every $P_i'$
passes through $s_i$ and $t_i$ and, therefore, subpaths $P_i=P_i'\cap
V$ (a subgraph of $P_i'$ induced by $V$)
comprise a solution for the \EDP{} problem. For the sake of
contradiction, suppose $P_i'$ does not pass through $t_i$, that is, it
ends at $t_j'$ for some $j\neq i$. Then $P_i'$ passes through the
``wrong tail'': $P_i'=s_i's_i\dots t_j \dots t_j'$. The weight of
$P_i'$ is at most $|V|$ since there are at most $V$ edges in $E$ with
weight $w^i_{e}=1$. Symmetrically, path $P_j'$ has weight at most
$|V|$. Since the maximum weight on edges of $E$ is $k|V|$, the paths
$P_i'$ have weight at most $k|V| + k(k-2)|V| < k^2|V|$, thus
contradicting our assumption.

\end{proof}

\subsection{Exact Algorithm for Few Paths}

The next theorem implies that the \SCoP{} problem is
fixed-parameter tractable in the number of paths.
Note that if a graph $G$ satisfies the ant-conservation condition, then
in any solution with $k$ edge-disjoint paths all the edges of $G$
are covered by a path~\cite{Marx04}.

\begin{theorem}
\label{thm:fpt}
For a fixed constant $k$, the \SCoP{} problem can be solved optimally
in time $O(|E| + k!|V|)$.
\end{theorem}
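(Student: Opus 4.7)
The plan is to set up a dynamic program over the DAG that exploits the structural fact noted just before the theorem: in every valid solution all edges of $G$ are covered, so \SCoP{} reduces to choosing, at each intermediate vertex $v$ of in/out-degree $d_v$, a bijection $\pi_v$ between its $d_v$ in-edges and $d_v$ out-edges. Combined with the fixed label $i$ on the unique edge leaving source $s_i$, these bijections induce a label $\ell(e)\in\{1,\dots,k\}$ on every edge, and the objective $\sum_e w_e^{\ell(e)}$ decomposes additively over edges.

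First I would compute a topological order of $G$ in $O(|V|+|E|)$ and then sweep through it one vertex at a time, maintaining a DP table indexed by a frontier labeling. After processing the prefix $\{v_1,\dots,v_t\}$, the frontier is the set of edges leaving this prefix; since each of the $k$ edge-disjoint paths crosses a topological cut of a DAG at most once, the frontier has at most $k$ edges, and its labels form an injection of these edges into $\{1,\dots,k\}$, giving at most $k!$ states. The value stored at a state is the maximum cost accumulated over all consistent labelings of the already-processed edges.

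The transition at $v_t$ is purely local. Its in-edges carry labels taken from the current frontier state, and each permutation $\pi_{v_t}$ (at most $d_{v_t}!\le k!$ options) determines the labels placed on the out-edges together with the local contribution $\sum_{e=(v_t,w)} w_e^{\ell(e)}$. Sources seed the DP by inserting their fixed-labeled out-edges into the frontier; destinations absorb their in-edges without introducing any new frontier edges. Correctness holds because the objective is a sum over edges and each edge's label is determined the moment its tail is processed, so the frontier state captures all information needed to extend the partial solution optimally. The final answer is the best DP value after every vertex has been processed.

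The main obstacle will be the complexity bookkeeping needed to turn this scheme into the claimed $O(|E|+k!|V|)$ bound. The $O(|E|)$ term covers the topological sort together with the per-edge lookups for weights, while the $O(k!|V|)$ term must charge only the $d_{v_t}!\le k!$ permutations enumerated at $v_t$ plus a constant-time table update per permutation. Making this amortization rigorous --- arranging that the transitions at $v_t$ can be processed in $O(d_{v_t}!)$ time rather than $O((\text{states})\cdot d_{v_t}!)$ --- is where the technical care of the proof is concentrated, and would likely proceed by grouping states according to the restriction of the frontier labeling to the in-edges of $v_t$, since only this restriction influences the current transition.
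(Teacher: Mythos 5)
Your proposal is the same algorithm as the paper's: a sweep in topological order maintaining a size-$k$ frontier, with a $d!$-way enumeration of in-edge/out-edge matchings at each vertex, justified by the observation that every edge is covered in any feasible solution. The one genuine gap is exactly the bookkeeping issue you flag at the end, and your proposed fix does not close it. With your state --- an injective labeling of the (up to $k$) frontier \emph{edges} --- there can be $k!$ states at a cut, so enumerating $d_{v_t}!$ matchings for each gives $k!\cdot d_{v_t}!$ work at $v_t$, exceeding the budget; and grouping states by their \emph{restriction} to the in-edges of $v_t$ does not reduce the count, since states within a group still differ on the rest of the frontier and must be processed separately. The paper's resolution is to coarsen the state in the opposite direction: the state records, for each path, the \emph{vertex} it currently sits at (a multiset of frontier vertices), so that paths co-located at the vertex $u_i$ of in-degree $d$ are interchangeable. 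This is sound because once the in-edges of $u_i$ have been paid for, the optimal continuation depends only on which paths are at $u_i$, not on which in-edge each used; hence among your edge-labeled states that differ only by permuting the labels on the in-edges of a common head vertex, only the one of maximum accumulated cost need be kept. With that quotient the number of states at the cut through $u_i$ is at most $k!/d!$, and multiplying by the $d!$ matchings enumerated per state gives exactly $k!$ work per vertex, yielding the claimed $O(|E|+k!\,|V|)$ bound. Everything else in your outline (topological sort, label propagation from the fixed source labels, additive decomposition of the objective over edges, absorption at destinations) matches the paper's proof.
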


\begin{proof}
Let $G=(V,E)$ be the input directed acyclic weighted graph,
$s_1,\dots,s_k \in V$ be the sources, and $t_1,\dots,t_k \in V$ be the
destinations.
We solve the problem using dynamic programming.
First, we compute a topological order of the vertices of $G$ and
fix the resulting order $u_1,\dots,u_{|V|}$. For convenience, we add
a super-destination vertex $t$ to $G$ together with zero-weight edges
$(t_1,t), \dots, (t_k,t)$.
Let $G^i=(V^i,E^i)$ for $1 \le i \le |V|+1$ be a subgraph of $G$ induced
by the vertices $u_i,\dots,u_{|V|},t$. For each $0\le i \le |V|+1$ we construct
a multiset $T_i$ with $|T_i|=k$. The multiset
$T_0$ consists of sources of $G$ so that $s_i$ is present $d^{out}(s_i)$ times.
For every $1\le i \le |V|$, we build $T_{i+1}$ from $T_i$ by removing vertex
$u_i$ and adding its outgoing neighbors. It easy to see that $T_{|V|+1}=(t,\dots,t)$.

Let us call the \emph{state} corresponding to $V^i$ an ordered sequence of
elements (not necessarily distinct) of $T_i$;
that is, $(v_1,\dots,v_k)$ in
which $v_j\in T_i, 1\le j\le k$ for some $i$.
For a state $(v_1,\dots,v_k)$, let $F(v_1,\dots,v_k)$ be the optimal
cost of routing $k$ edge-disjoint paths on $G$ so that the $i$-th path
starts at $v_i$ and ends at $t$.
It is easy to see that $F(t,\dots,t)=0$, and the optimal solution for the
\SCoP{} problem is $F(s_1,\dots,s_k)$.

In order to compute the value $F(S)$ for the state
$S=(v_1,\dots,v_k)$ corresponding to $V^i$, we choose the vertex $u_i\in S$.
Consider the edges $e_1=(u_i,a_1),\dots,e_d=(u_i,a_d)$, where $d=d^{out}(u_i)$.
Since all these edges should be covered by paths, the vertex $u_i$ is
repeated  exactly $d^{in}(u_i)=d$ times  in $S$;
without loss of generality we may assume that $S=(u_i,\dots,u_i,v_{d+1},\dots,v_k)$.
The paths may cover the edges in $d!$ different ways:
$$
F(v_1,\dots,v_d,v_{d+1},\dots,v_k) = \max_{\pi} \left( F(a_{\pi_1},
\dots, a_{\pi_d}, v_{d+1}, \dots, v_k) +
\sum_{j=1}^k w^{\pi_j}_{e_j} \right),
$$
where $\pi$ runs over all permutations of $\{1,\dots,d\}$. Note that
$(a_{\pi_1}, \dots, a_{\pi_d}, v_{d+1}, \dots, v_k)$ is a state
corresponding to $T_j$ with $j=i+1$.

The correctness of the algorithm follows from the observation that  we
consider all possible ways to route the paths at every intermediate
vertex.
The number of different states corresponding to $V^i$ is $k!/d!$, where $d$ is
the outdegree of the vertex $u_i$. In order to compute the values $F(S)$
for a state corresponding to $V^i$ we
check $d!$ permutations; hence, the running time is $k!$ for each
$V^i$. Summing over all graphs $G_i$, we obtain $O(|E|+k!|V|)$ running time.

\end{proof}

Next we suggest a greedy heuristic and provide an integer linear programming (ILP) formulation for \SCoP{}.

\begin{figure}[t]
    \center
	\subfigure[]{
    \includegraphics[height=3cm]{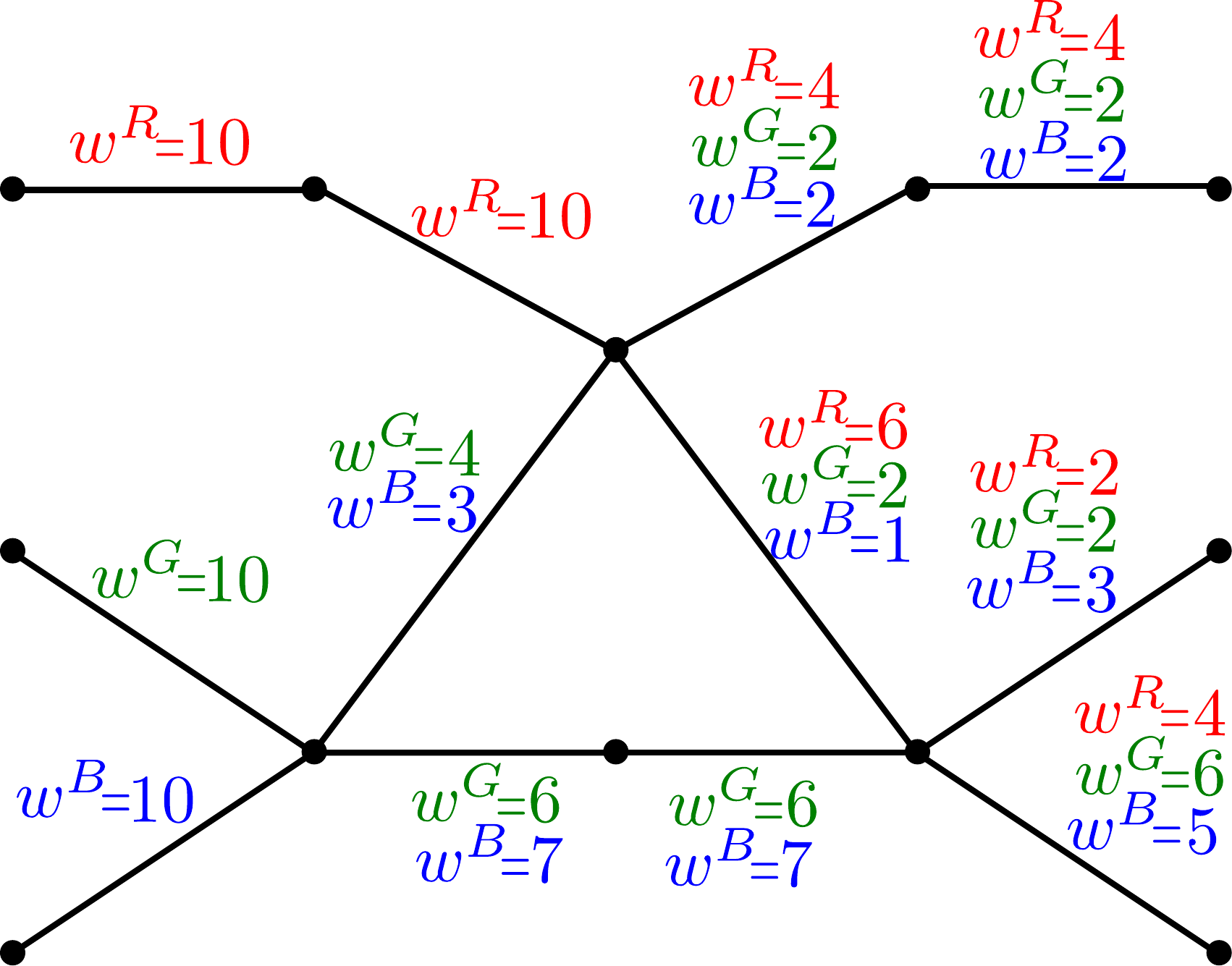}}
\hfill
	\subfigure[]{
    \includegraphics[height=3cm]{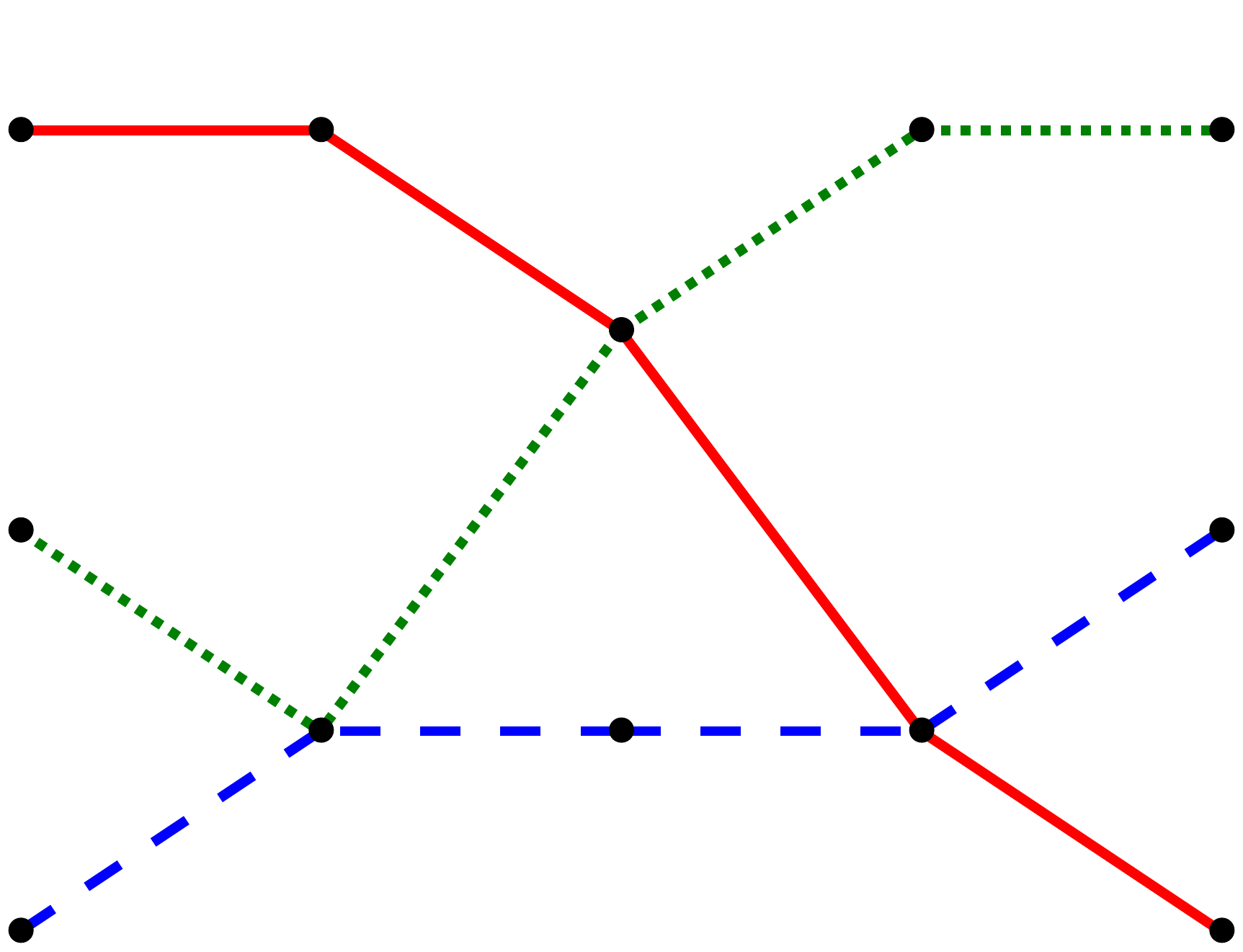}}
\hfill
	\subfigure[]{
    \includegraphics[height=3cm]{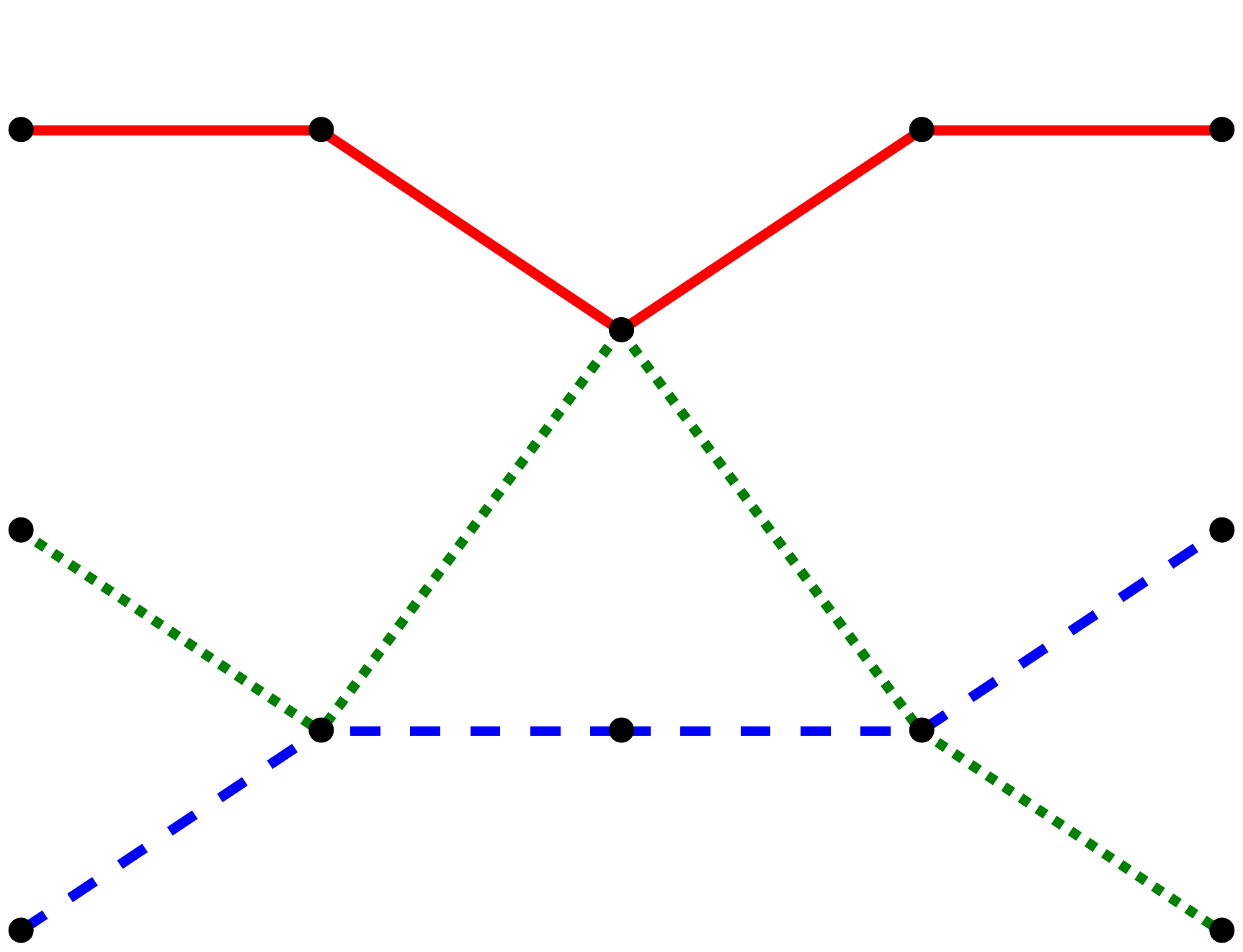}}
    \caption{The greedy algorithm may produce non-optimal solution.
    (a)~The input graph with 3 paths: R, G, and B.
    (b)~The paths computed by the greedy algorithm with $cost=75$
    (R is shown solid, G~-- dotted, B~-- dashed).
    (c)~The optimal solution with $cost=77$.}
    \label{fig:greedywrong}
\end{figure}

{\bf The Greedy Algorithm:}
The algorithm finds the longest path among all source-destination pairs $s_i$, $t_j$.
The length of a path starting
at $s_i$ is the sum of $w^i_{uv}$ for all edges $(u,v)$ of the path.
Since $G$ is an acyclic directed graph, the longest path for the
specified pair $s_i, t_j$ can be computed in time $O(|E|+|V|)$ via
dynamic programming. Once the longest path is found, we remove all
edges of the path from $G$ and proceed with the next longest path. The
algorithm finds at most $k$ paths on each iteration and the number of
iterations is $k$; hence, the overall running time is
$O(k^2(|V|+|E|))$.

The algorithm always yields a solution with $k$ disjoint paths.
Initially, $G$ satisfies the ant-conservation condition:
$d^{in}(v)=d^{out}(v)$ for all intermediate vertices $v$.
Since $G$ is connected, there exists a source-destination path. After removing the longest path, $G$
may be disconnected, but the ant-conservation condition still holds
for every connected component.
The number of outgoing edges from sources
and the number of incoming edges to destinations are equal for every
connected component.
Thus, the greedy algorithm produces a feasible solution but not necessarily the optimal one; see
Fig.~\ref{fig:greedywrong}.

\begin{figure}[t]
\center
        \subfigure[]{
    \includegraphics[height=2.8cm]{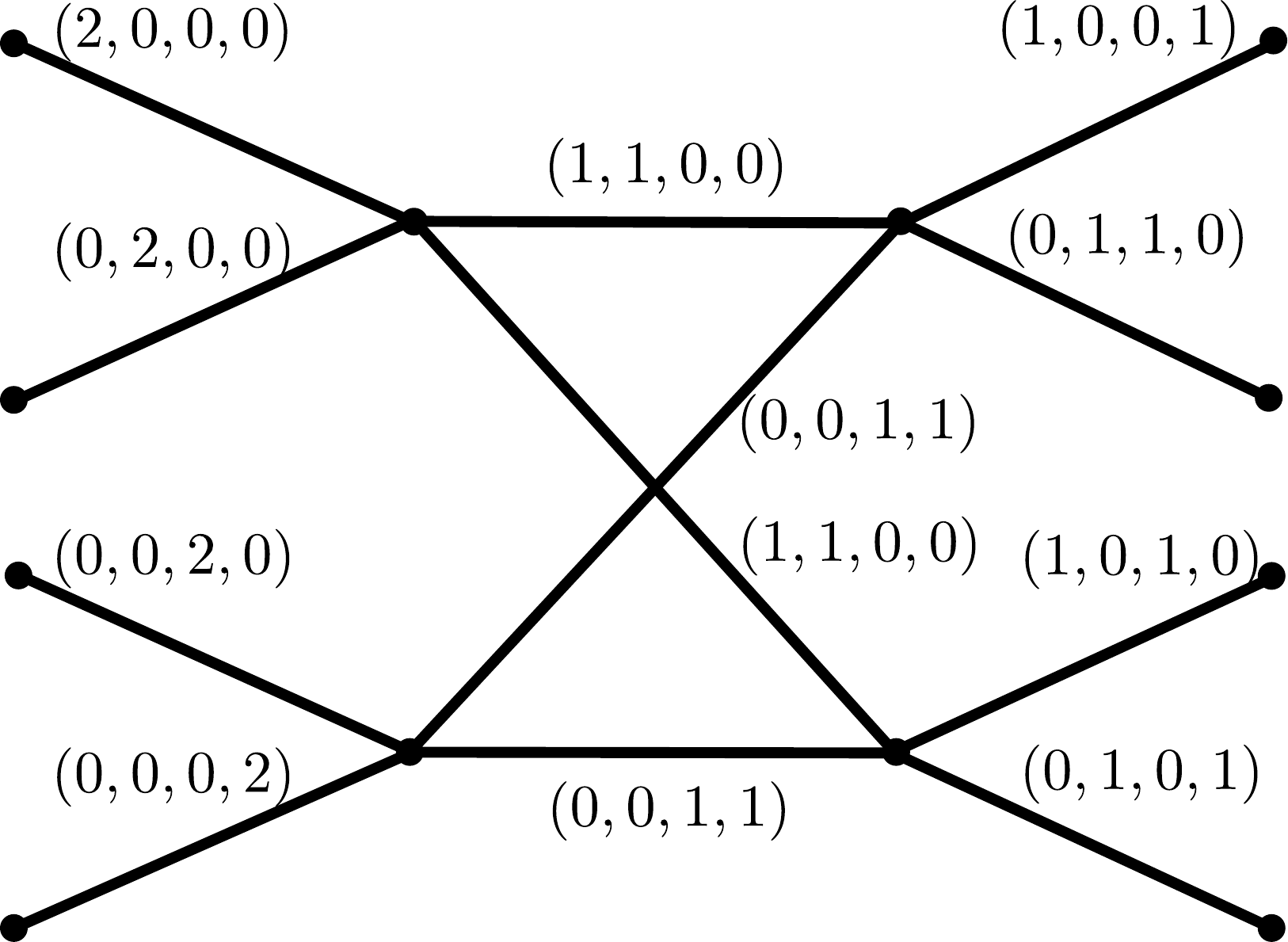}}
\hfill
        \subfigure[]{
    \includegraphics[height=2.8cm]{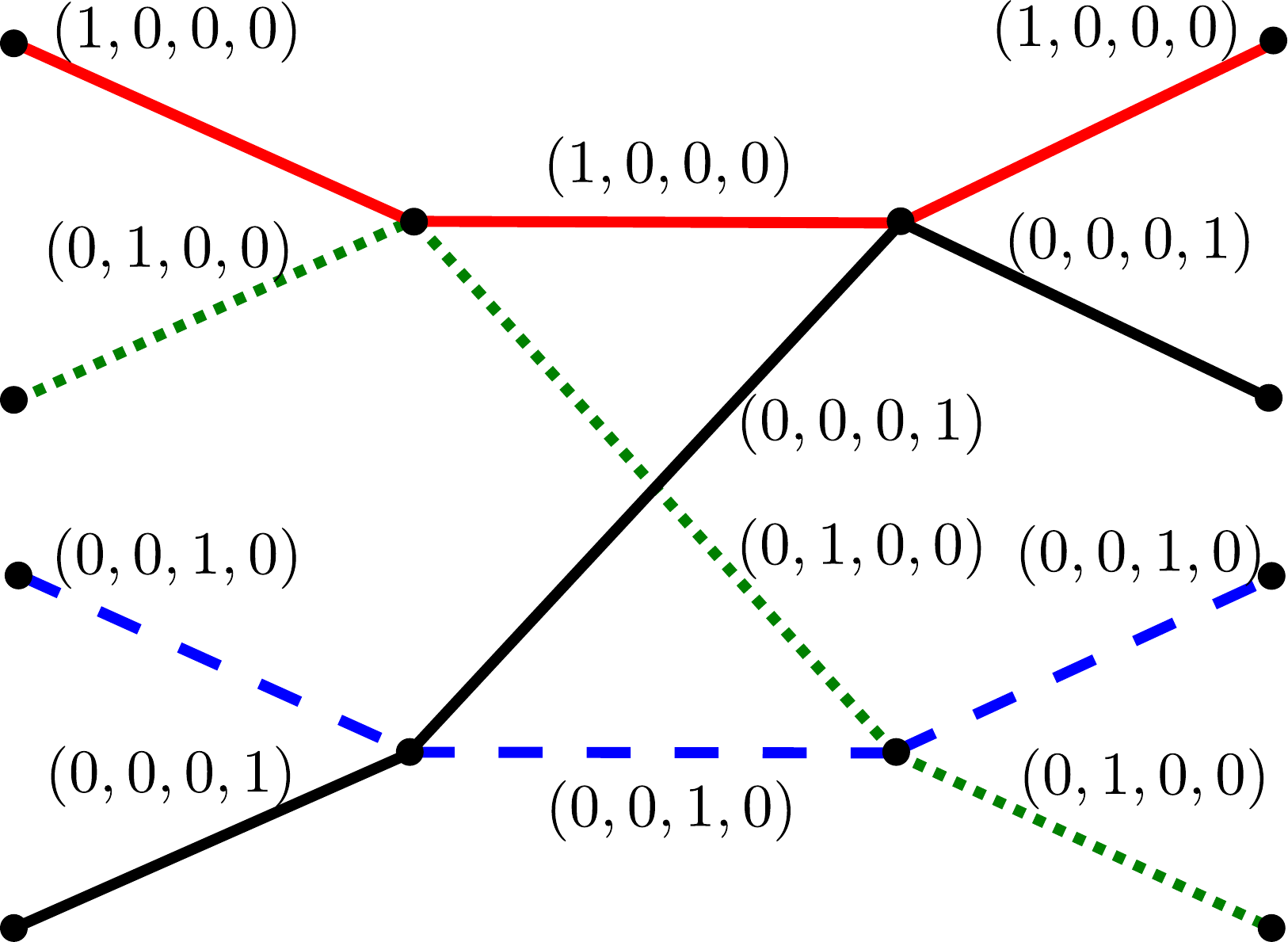}}
\hfill
        \subfigure[]{
    \includegraphics[height=2.8cm]{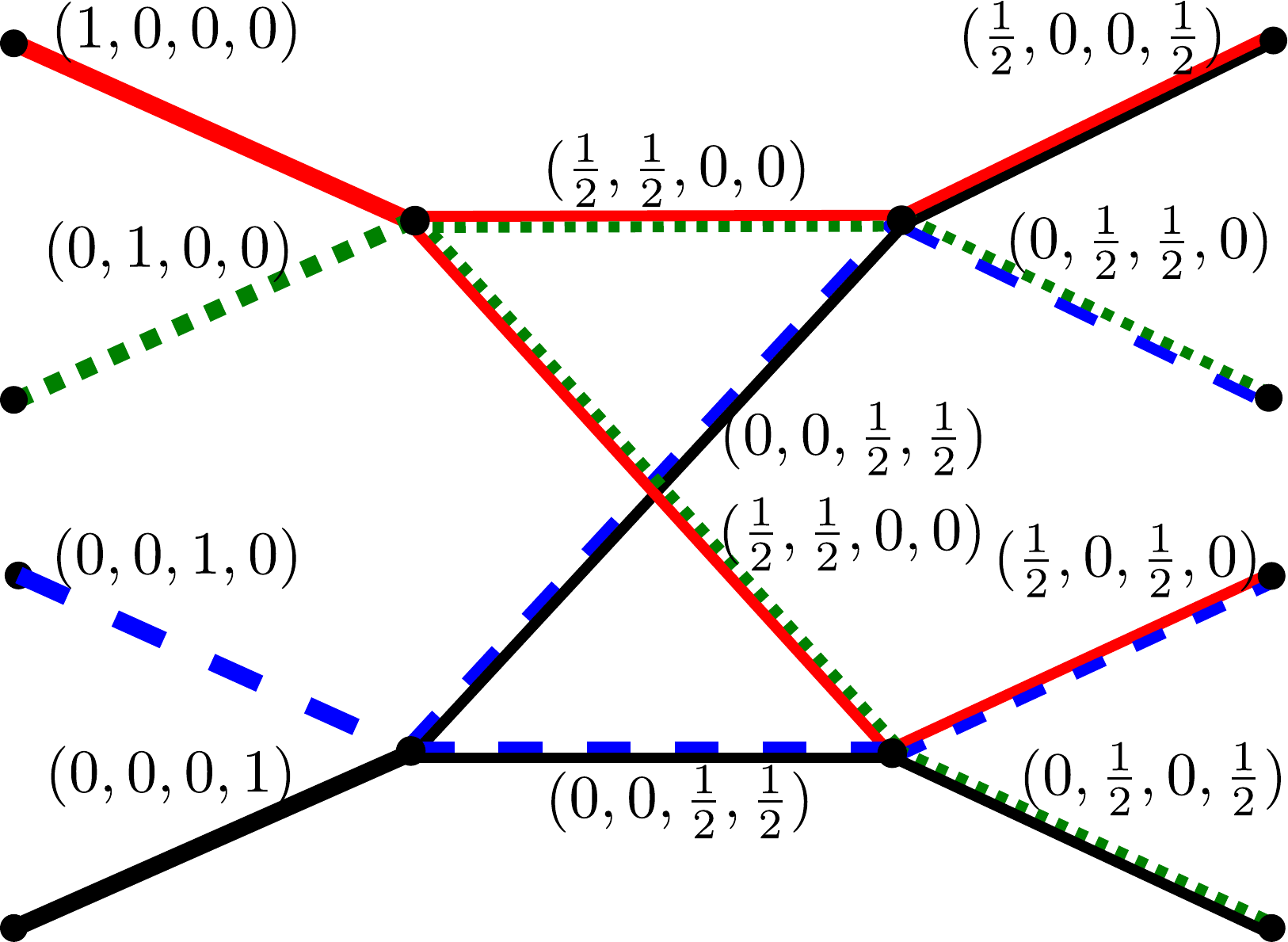}}
    \caption{Graph in which a fractional solution has cost greater
    than the cost of any integer solution.
    (a)~The input graph with 4 paths. The vector on the edge $e$
    corresponds to the weights $(w_e^1, w_e^2, w_e^3, w_e^4)$ for the
    paths on $e$.
    (b)~The optimum integer solution with $cost=15$.
    The vector on the edge $e$ corresponds to the solution
    $(x_e^1,x_e^2, x_e^3, x_e^4)$ on $e$.
    (c)~A fractional solution with $cost=16$.
    The vector on the edge $e$ corresponds to the solution
    $(x_e^1,x_e^2, x_e^3, x_e^4)$ on $e$.}
    \label{fig:noninteger}
\end{figure}

{\bf Linear Programming Formulation:}
Let $P_i$ denote the path in $G$ from $s_i$ to $t_j$ for some $j$ (the
path of the $i$-th ant).
For each $P_i$ and each edge $e$, we introduce a binary variable
$x_e^i$, which indicates whether
path $P_i$ passes through the edge $e$. The \SCoP{} problem can be
formulated as the following ILP:

$$
\begin{array}{llcllr}
\mbox{\textbf{maximize}}\:\: & \sum_e \sum_i w_e^i x_e^i \\
\mbox{\textbf{subject to }} & \sum_i x_e^i        &=&  1  & \forall
e\in E & \:\:\:\:(1)\\
& \sum_{uv} x_{uv}^i  &=&  \sum_{vw} x_{vw}^i \:\:& \forall v \in V \setminus \{s_1,t_1,\ldots, s_k,t_k\},
1\le i \le k & (2)\\
& \sum_{v} x_{s_iv}^i &=&  1 & \forall 1\le i \le k & (3) \\
& x_e^i & \in & \{0,1\} & \forall e\in E, 1\le i \le k & (4)
\end{array}
$$

Here constraint~(1) guarantees that the paths are
disjoint: there is exactly one path passing through every edge. Constraint~(2)
enforces consistency of the paths at
every intermediate vertex: if a vertex $v$ is contained in a path,
then the path passes through
an edge $(u,v)$ and an edge $(v,w)$ for some $u,w \in V$. In constraint~(3)
we sum over vertices $v$ with $(s_i,v)\in E$; it
implies that the $i$-th path
starts at source $s_i$.
If we relax the integrality constraint (4) by $0 \le x_e^i \le 1$,
we have a fractional LP formulation for the \SCoP{}
problem, which can be solved in polynomial-time. However,
the solution does not have a natural interpretation in the context of
ants (fractional ants do not make sense in the biological problem).
Further, we found an example for which the best
fractional solution has cost strictly greater than the cost of any
integer solution; see Fig.~\ref{fig:noninteger}.

We can convert an optimal fractional LP solution $x^*$ into a feasible integer solution as follows. Randomly
pick an ant $1 \le i \le k$, with probability of choosing the $i$-th
ant proportional to its weight $\sum_e w_e^i {x^*}_e^i$ in the fractional
solution. We then consider the graph with modified edge weights in which
the weight of an edge $e\in E$ is $w_e^i {x^*}_e^i$. We look for the longest path starting at source $s_i$ in this graph, assign the path to the $i$-th ant,
and remove this path from the graph. We then rerun the LP to
find a fractional solution on the smaller instance. Our experiments suggest that
this rounding scheme yields an integer solution that is close to optimal.

\section{Experimental Results}
We use a machine with Intel~i5 3.2GHz, 8GB~RAM and CPLEX Optimizer.

\begin{table}[t]
\centering
\caption{Average and worst root-mean-square error (in pixels) computed
for proposed algorithms.}
\label{tab:table2}
\medskip

\begin{tabular}{|l|c|c|c|}
\hline
Algorithm & Worst RMSE & Average RMSE & Runtime \\
\hline
Automated Solution      & 95.308    & 9.593 & $160$ min\\
Local Mean              & 105.271   & 12.531 & $<100$ ms \\
Local Median            & 112.741   & 9.801 & $<100$ ms \\
Local \Frechet{}        & 127.104    & 15.562 & $1.2$ sec \\
Homotopy Median         & 146.267   & 20.244 & $8.2$ sec \\
Buffer Median           & 171.556   & 23.998 & $9.7$ sec \\
Global ILP              & 20.588    & 8.716 & $34$ sec \\
Global Greedy           & 24.820    & 8.900 & $0.2$ sec \\
\hline
\end{tabular}
\end{table}

{\bf Real-World Dataset:}
Here we consider a real-world scenario and compare ground truth data
to seven different consensus algorithms described in this paper, along
with an automated solution.
To evaluate our various algorithms, we work with a video of a {\em
Temnothorax rugatulus} ant colony containing $10,000$ frames, recorded
at $30$ frames per second.
This particular video contains ants that are individually painted and
has been analyzed with the state-of-the-art automated multi-target tracking system of
Poff~{\em et~al.}~\cite{Poff12}. The method required about $160$ minutes to analyze
the video.
To evaluate the automated system,
they create a {\em ground truth} trajectory for each ant, by manually
examining \emph{every ant in every 100th frame} of the automated
output and correcting when necessary.
We use this ground truth data to evaluate the efficiency of the algorithms considered
and the automated system. Note that just by the
way the ground truth is generated, results are
inherently biased in favor of the automated solution.

Our dataset consists of $252$ citizen scientist generated
trajectories for $50$ ants, with between $2$ and $8$ trajectories per ant.
To compute the ant trajectories, we construct the interaction graph
$G$, as described in Section~\ref{sect:global}. For our dataset, the
graph contains $4,246$ vertices and $10,494$ edges.
We apply the five local methods and two of the global methods
(greedy and integer linear programming) to build consensus
trajectories.

We computed seven different consensus trajectories for each ant: five based on the local
algorithms, and two based on the global algorithms. An overview of
results is in Table~\ref{tab:table2} and in Fig.~\ref{fig:ex2}. We
compare all seven, as well as the trajectories computed by the
automated system, by measuring average and worst root-mean-square
error of the Euclidean distance between pairs of points of
computed  and ground truth trajectories. We notice that the approximate dimensions of an
ant in our video are $60\times 15$ pixels.

Among the local approaches, the local
median is best. The local mean is negatively impacted by the outliers in the data.
The \Frechet{} approach, Homotopy median, and Buffer median perform poorly.
This could be due to the very self-intersecting trajectories making these algorithms
miss entire pieces. We used the default values for all the parameters in
these algorithms; a careful tuning will likely improve accuracy.
In the \Frechet{} approach we tried $50$  different random orders of merging trajectories.

\begin{figure}[tb]
    \center
	\subfigure[]{
    \includegraphics[width=0.7\textwidth]{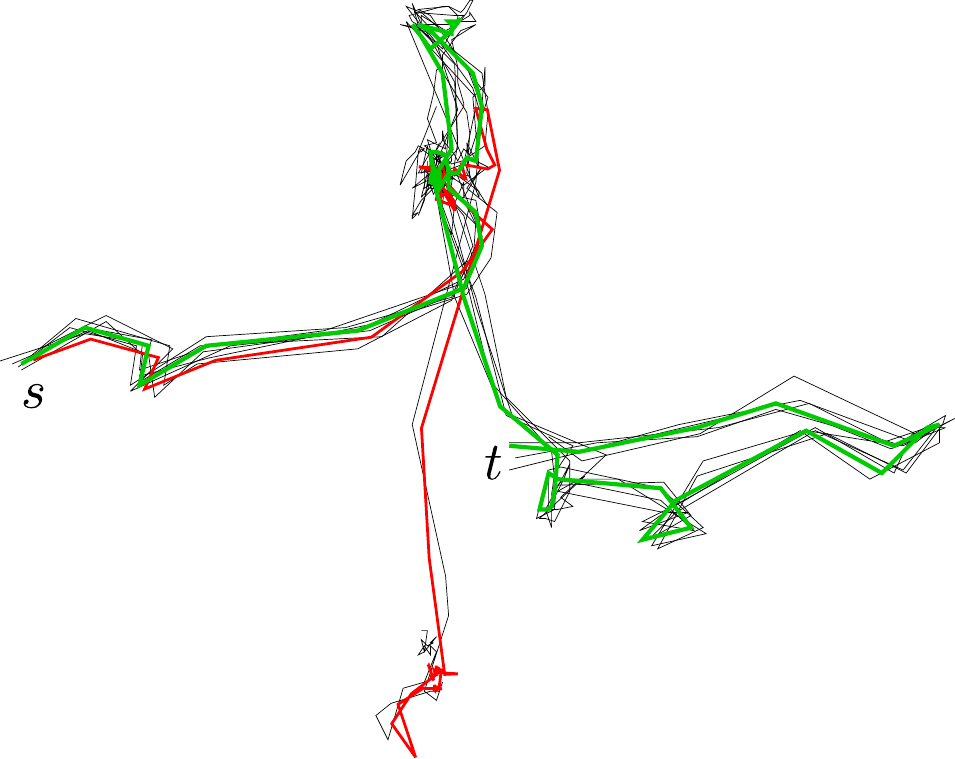}
   \label{fig:ex1}}
	\subfigure[]{
    \includegraphics[width=0.7\textwidth]{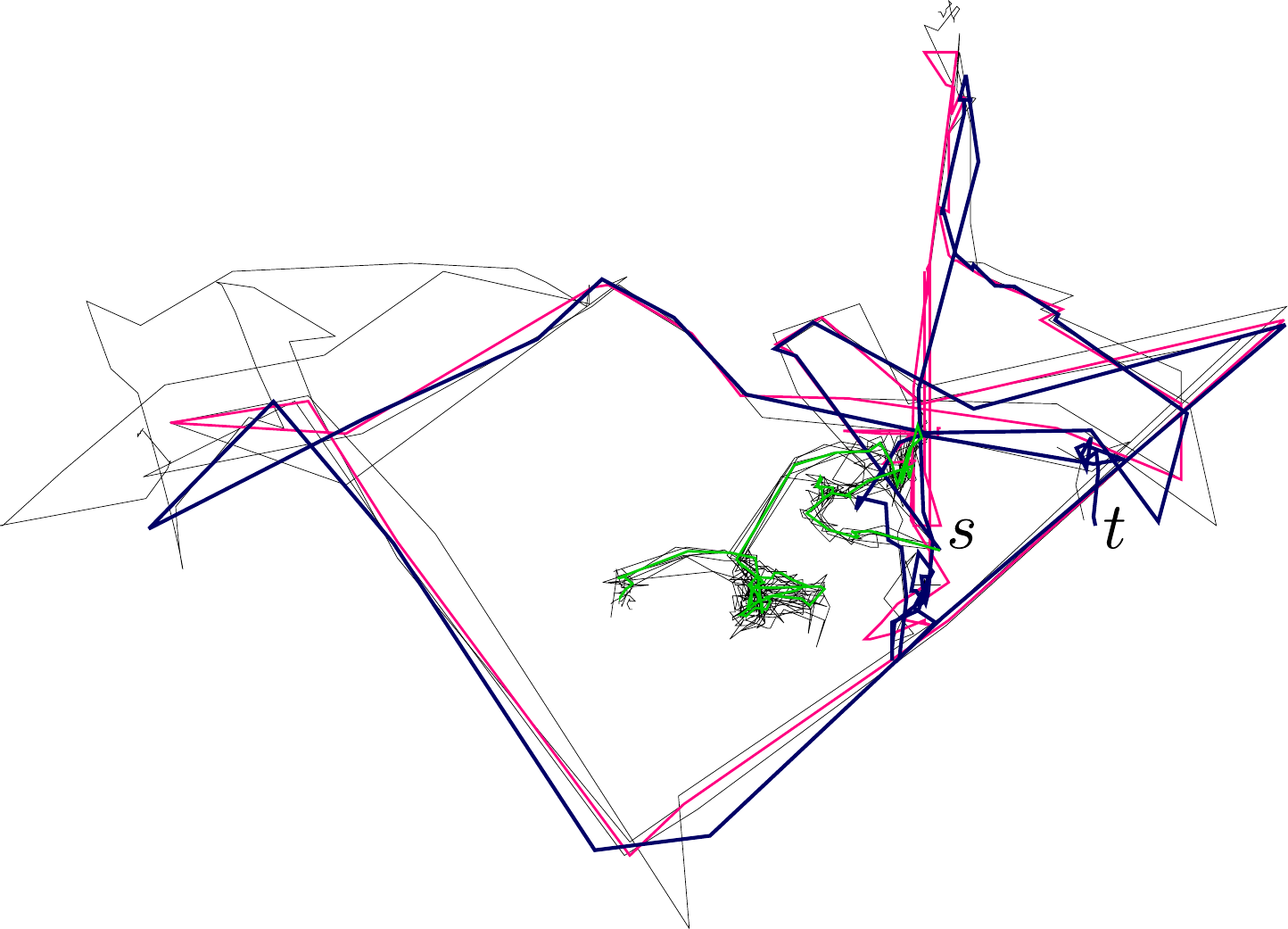}
    \label{fig:ex3}}
    \caption{\label{fig:exall}\small
(a)~The automated solution (red) switches to the wrong ant, while our
local median consensus (green) agrees the majority of input
trajectories (black).
    (b)~Most of the input trajectories (black) follow the wrong ant;
hence, the local median consensus (green) is incorrect. Here our
global consensus trajectory (dark blue) finds a trajectory closer to
the ground truth (pink).
}
\end{figure}

%The global approaches perform better in all the trajectories where the results differ
%significantly.
%There are only a few segments of trajectories where the accuracy of automated solution differ
%from the global approach solution, in all of which the global approach performs better.
%From Fig.~\ref{fig:ex2}, it is evident that for most of the ants, the RMSE  of automated solution
%and global approach solution are comparable,
%with a few exceptions, where the global approach clearly performs better.
%For ants with RMSE below 20 pixels,
%the difference in performance is too insignificant to draw any concrete conclusions.
%
%It is important to emphasize here that the cases where many citizen
%scientists make the same mistake do happen in practice; see the full version~\cite{arxiv12}
%for an example.
%In this case, none of the local algorithms have a chance to recover a correct
%trajectory. Only the global approaches allow us to identify the
%correct ant and produce the most accurate results.

\begin{figure}[t]
    \center
	\subfigure[]{
    \includegraphics[width=0.48\textwidth]{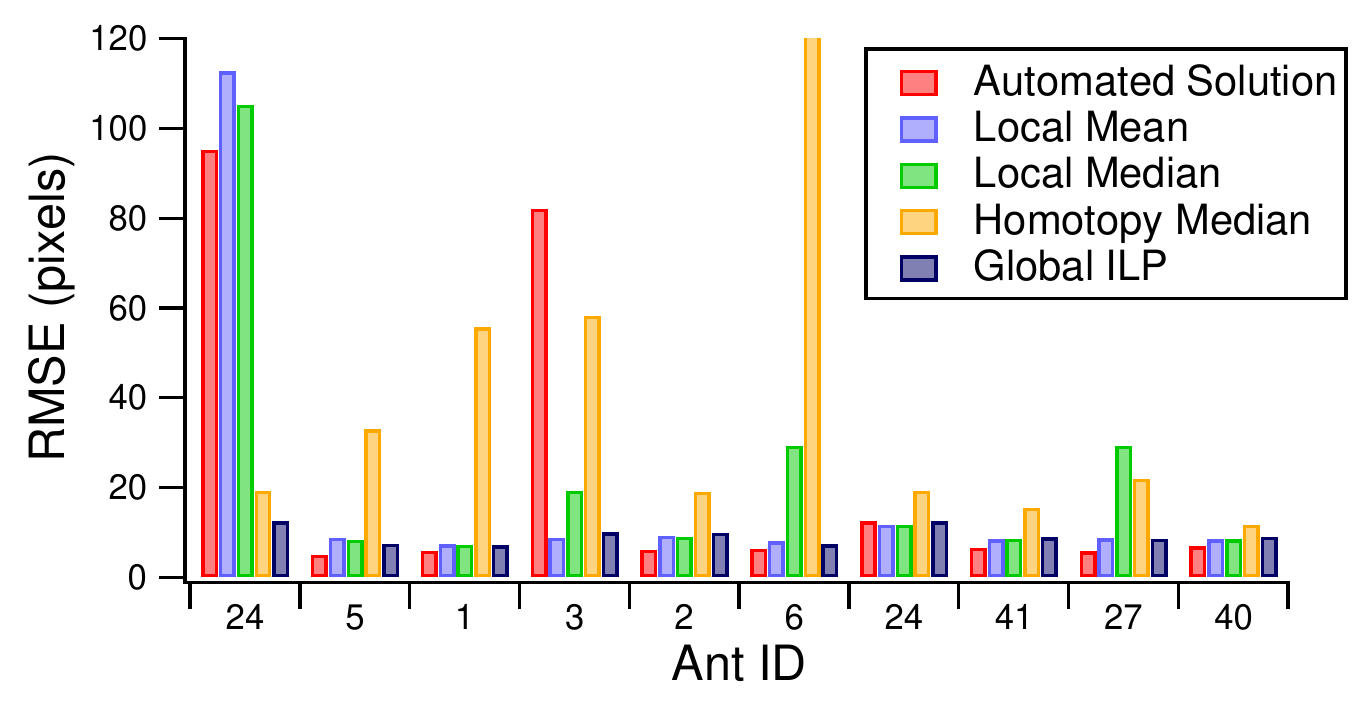}
    \label{fig:ex2}}
	\subfigure[]{
    \includegraphics[width=0.48\textwidth]{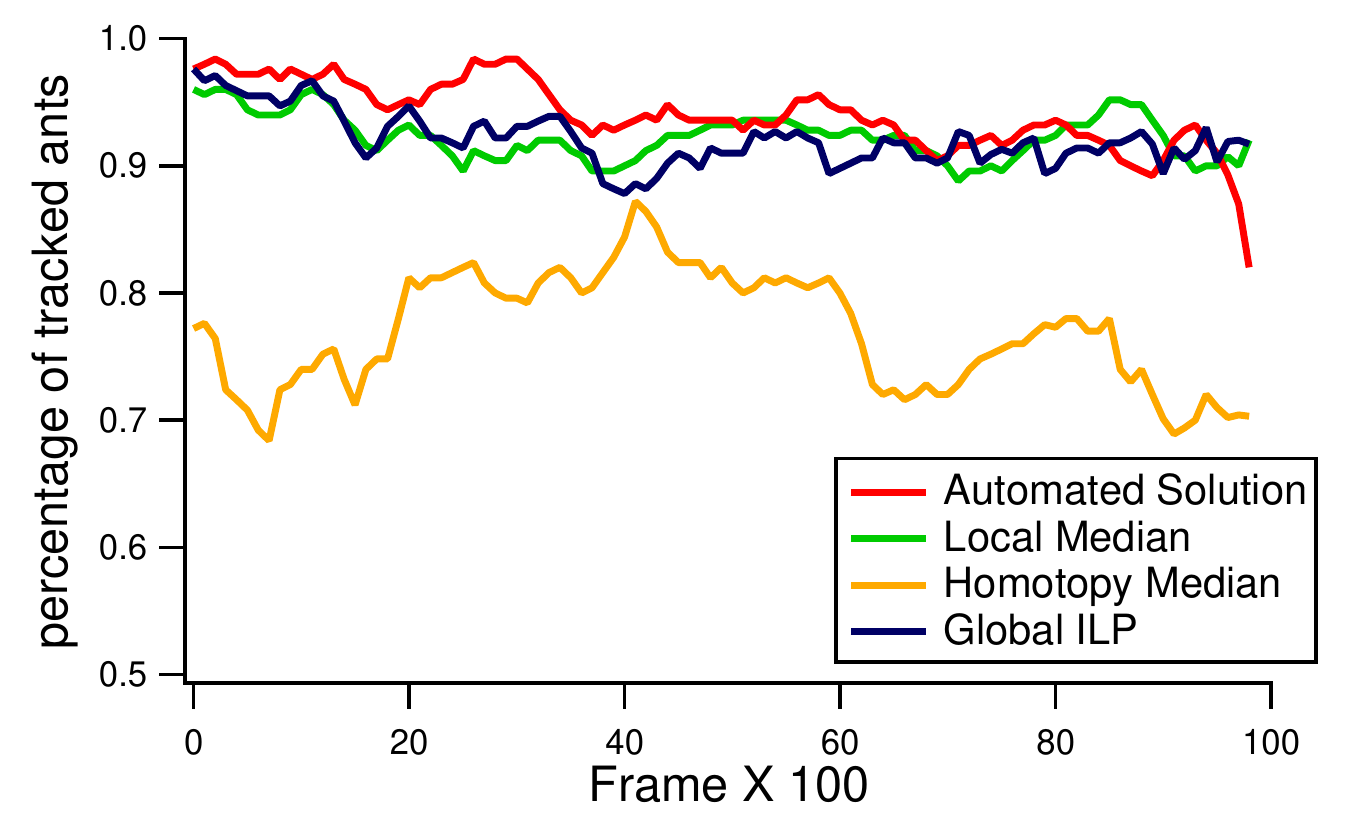}
    \label{fig:ex4}}
    \caption{
    (a)~Root-mean-square error for the 10 ``most movable'' ants
(according to the ground truth).
    (b)~Comparison of tracking accuracy between automated solution and
proposed algorithms.}
\end{figure}

The two global approaches perform similarly. There are only few
segments of trajectories where the results differ.
It is important to emphasize here the cases where the citizen
scientists make the same mistake do happen in practice.
We found an example in which 5 out of 8 input trajectories follow the
wrong ant; see Fig.~\ref{fig:ex3}. In this case,
none of the local algorithms have a chance to recover a correct
trajectory. Only the global approaches allow us to identify the
correct ant and produce the most accurate results.
We stress again that the ground truth data is inherently biased
towards the automated solution because it was obtained by modifying
the trajectories obtained from the automated solution. Yet our
global algorithms perform better.

A big challenge for the existing automated systems is tracking
ants in long videos. For long videos (e.g., hundreds or even thousands
of hours), automated tracking methods are not reliable. Whenever such
algorithms loose tracking, the error quickly accumulates and a
trajectory often is not recovered; see Fig.~\ref{fig:ex1}. Our global
approaches naturally resolve this problem. We evaluate
tracking accuracy as the percentage of ants correctly ``tracked'' at a
given timestamp; here, we consider the ant correctly tracked if the
distance between the ground truth and our trajectory is less than 15
pixels (typical width of ant head).
The accuracy of the automated solution decreases over time, and by the
end of the 5-minute video it is below $87\%$ accuracy; see
Fig.~\ref{fig:ex4}. Our algorithms are steadily over $90\%$ accuracy
over the entire video.

{\bf Synthetic Dataset:}
In order to validate our global approach on a
larger dataset, we generate a collection of synthetic graphs. Here
we test our algorithms for the \SCoP{} problem, that is,
the algorithms for computing disjoint paths on graphs, rather than for extracting optimal trajectories.
We construct a set of directed acyclic graphs having approximately the same
characteristics as the interaction graph computed for the real-world dataset.
The graph construction follows the same pipeline as described in Section~\ref{sect:global};
every graph is generated for $k\le 50$ ants and $T\le 100$ timestamps.
The ants form $k$ vertices for the initial timestamp; on every subsequent timestamp
pairs of ants meet with probability $0.4$ (the constant estimated for the real-world graph). Thus,
for every timestamp, we have from $k/2$ to $k$ vertices. The pairs of ants meeting at a timestamp
are chosen randomly with the restriction that indegrees and outdegrees of every vertex
in a graph are equal and at most $2$. By construction of the graphs, we naturally get
``ground truth'' paths for all ants.
Next we generate citizen scientist trajectories in two scenarios: $2-8$ trajectories (as in the real-world dataset) and $15-20$ trajectories per ant. A citizen scientist tracking an ant is modeled by a path starting at the source of the ant. At every junction vertex (with outdegree $2$) there is fixed probability $P(error)$ of making a mistake by switching from tracking the current ant to the other one.
If $P(error)=0$, then user trajectories always coincide with the ground truth paths; if $P(error)=0.5$, the trajectories may be considered as random walks on the graph.

\begin{figure}[t]
    \center
	\subfigure[]{
    \includegraphics[width=0.48\textwidth]{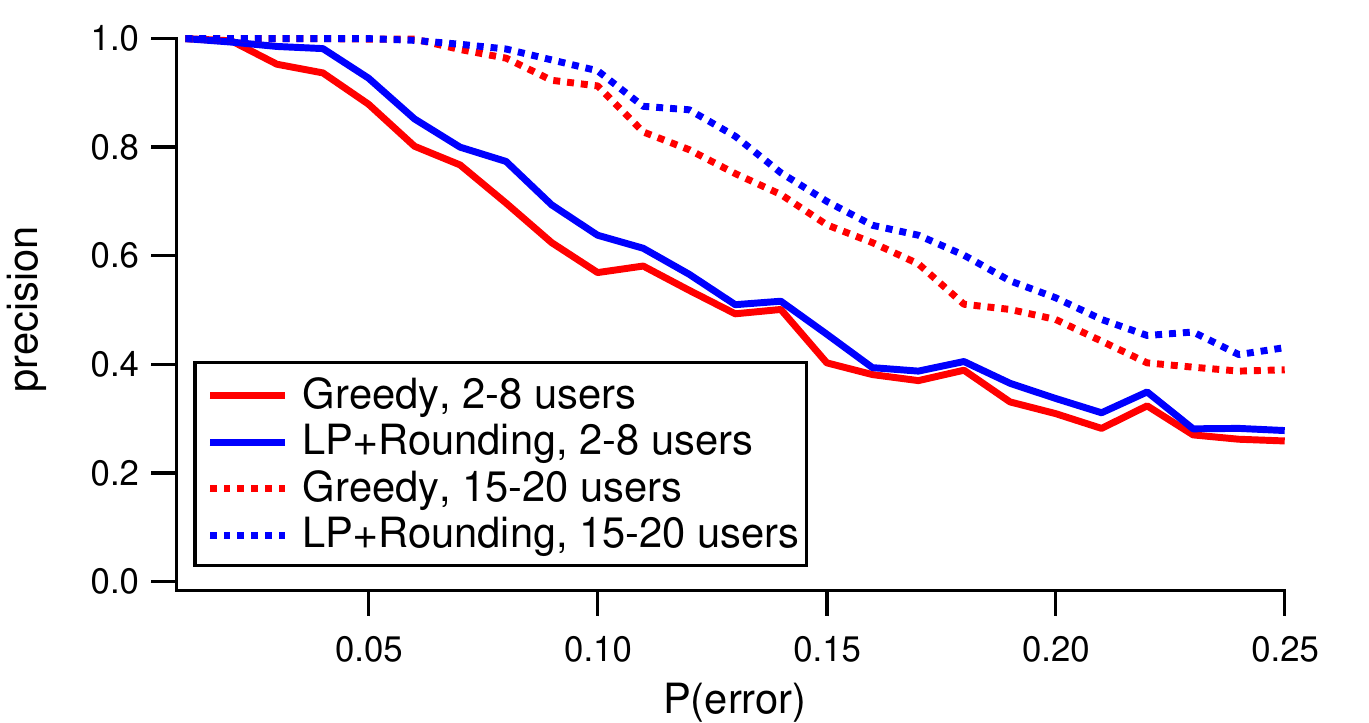}
    \label{fig:growing_real_a}}
\hfill
	\subfigure[]{
    \includegraphics[width=0.48\textwidth]{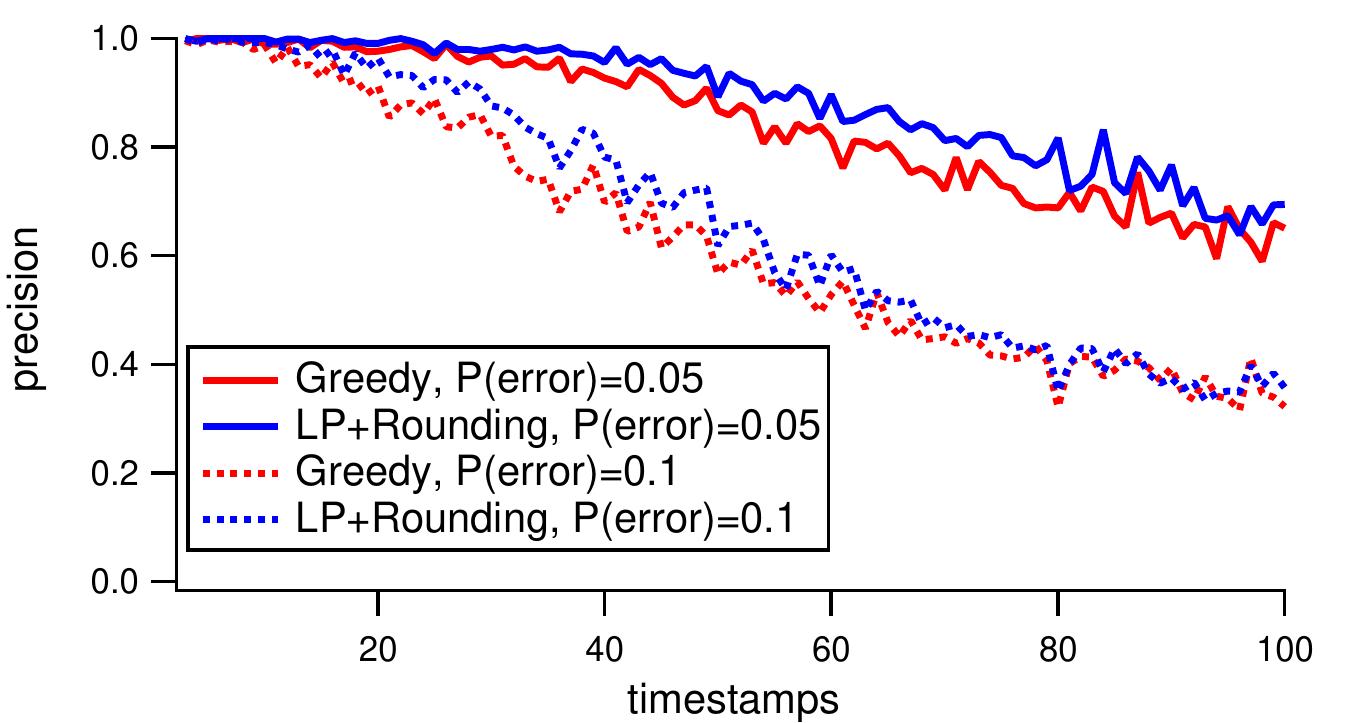}
    \label{fig:growing_real_b}}
    \caption{Precision of the algorithms on the synthetic dataset. Solid lines represent average values
    over $5$ runs of the algorithms for a given error/timestamp.
    (a)~Results for $k=50$ ants, $T=100$ timestamps (5-minute video segment), and
    various values of $P(error)$.
    (b)~Results for $k=50$ ants, $2-8$ user trajectories per ant, and various number of timestamps.}
    \label{fig:growing_real}
\end{figure}

We evaluate the greedy heuristic (\alg{Greedy}) and
the linear program with rounding (\alg{LP+Rounding}). As the latter
heuristic is a randomized algorithm, we report the best result over $5$ runs
for a given input. For small instances, we also compute
an exact solution using the integer linear program (\alg{ILP}).
We analyze the precision of the algorithms under various parameters. For every edge of the graph $G$,
we say that it is correctly identified if both the algorithm and the ground truth assign the edge to the same path.
The precision is measured as the fraction of correctly identified edges in $G$: a value of $1$ means
that all paths are correct.
As in the real-world dataset, we consider a scenario with $k=50$ ants. As expected, increasing the probability of making a mistake decreases the quality of the solution; see Fig.~\ref{fig:growing_real_a}. However,
increasing the number of user trajectories does help. Both algorithms recover all paths correctly
if the number of trajectories for each ant is more than $15$, even in a case with $P(error)=0.1$. On the other hand,
with $P(error)>0.2$ the precision drops to under $50\%$.
Although we cannot definitively measure the accuracy of all citizen scientists, empirical evidence from  our experiment indicates that the error rate was very low: $P(error) = 0.02$. This is an order of magnitude lower than the upper limit on errors that our algorithms can handle.
We also consider the impact of video length on the precision of the algorithms; see Fig.~\ref{fig:growing_real_b}.
Not surprisingly, precision is higher for short videos: for $2$-minute segments ($40$ timestamps) and $P(error)=0.05$,
\alg{LP+Rounding} produces the correct paths, while for $5$-minute segments ($100$ timestamps) only $60\%$ of paths are correct.

We also analyze the effectiveness of the \alg{Greedy} and \alg{LP+Rounding} algorithms for the \SCoP{} problem
with $P(error)=0.2$ and $2-8$ trajectories per ant;
see Fig.~\ref{fig:generated}.
To normalize the results, values are given as a percentage of the $cost$ of an optimal fractional solution (\alg{FLP}) for the \SCoP{} problem. Note that the \alg{ILP} results are in the range$[0.98,1.0]$, which means that an optimal integer solution is always very close to the optimal fractional solution.
Both \alg{Greedy} and \alg{LP+Rounding} perform very well, achieving $\approx 0.85$ of the optimal solution. These two
algorithms produce similar results, with \alg{LP+Rounding} usually outperforming \alg{Greedy}.

\begin{figure}[tb]
    \center
	\subfigure[]{
    \includegraphics[width=0.48\textwidth]{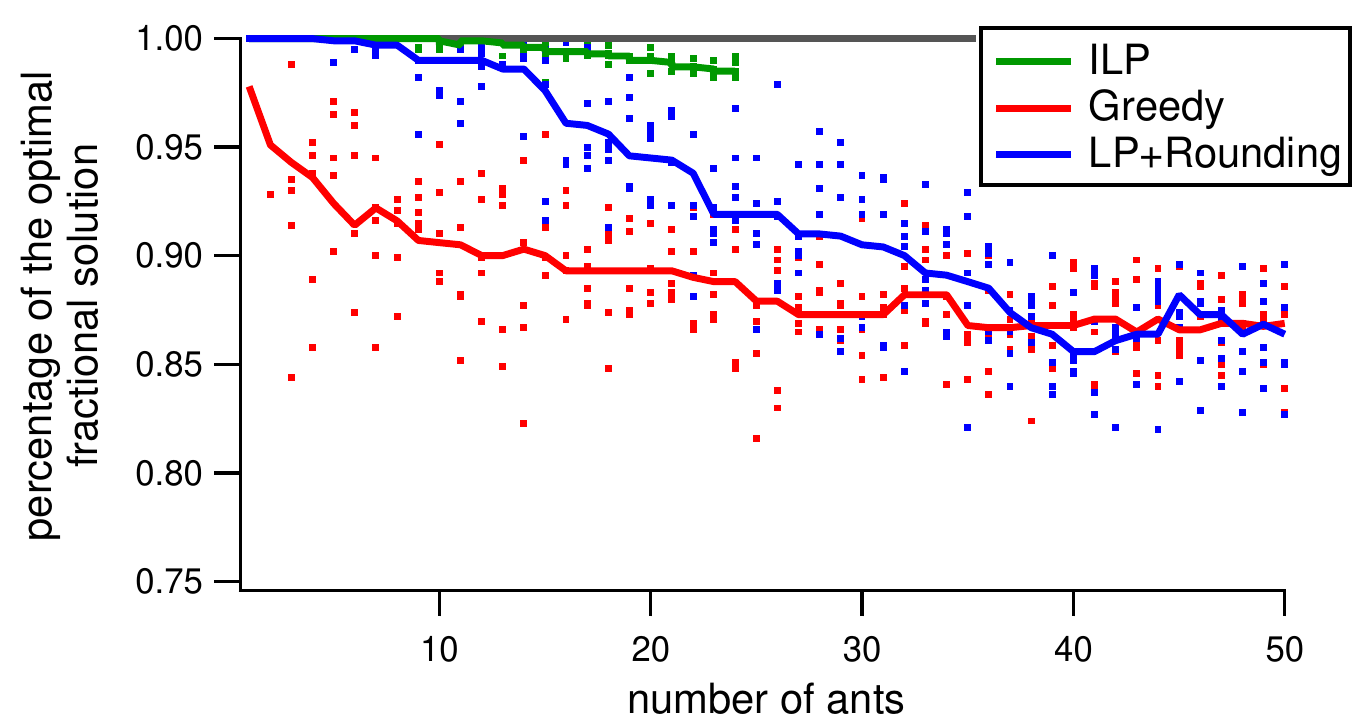}}
\hfill
	\subfigure[]{
    \includegraphics[width=0.48\textwidth]{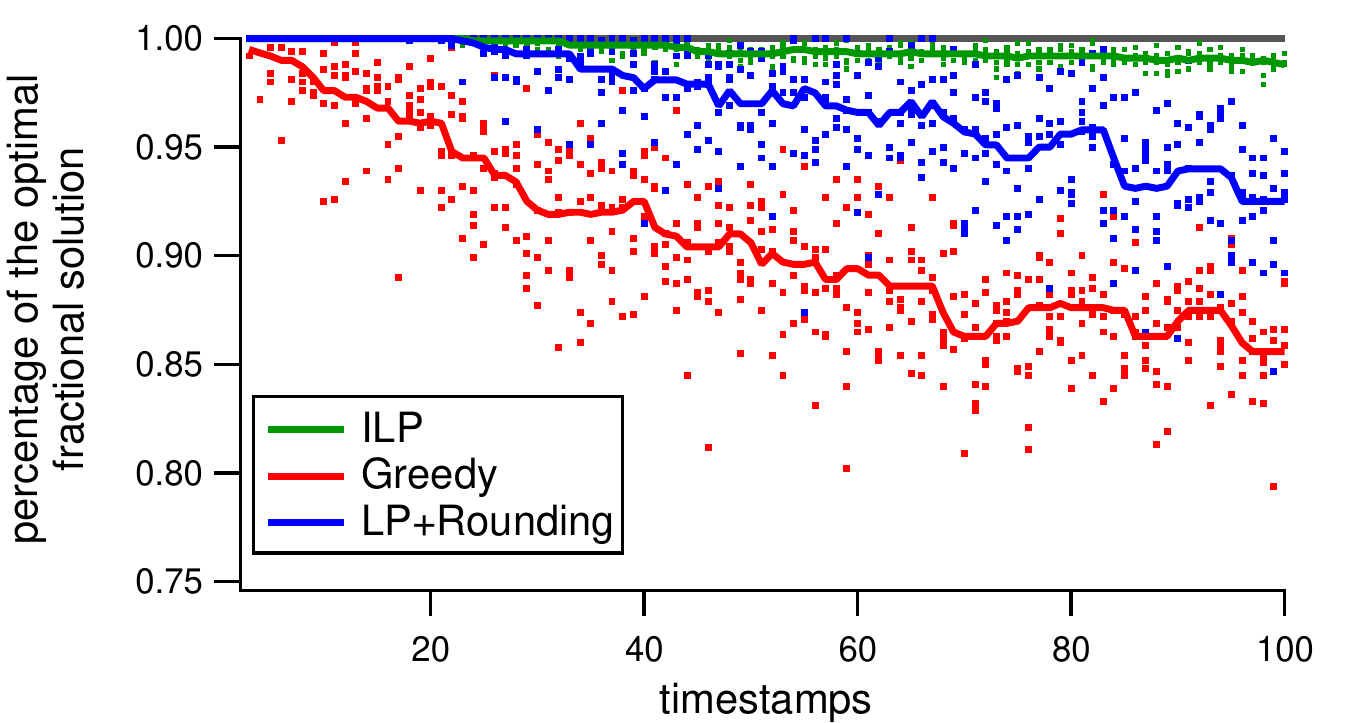}}
    \caption{Quality of the algorithms:
    ratio between the $cost$ of obtained solution and the optimal (fractional) $cost$.
    Results for single instances are depicted by dots and solid
    curves show the average values over $5$ runs
    for a given number of ants/timestamps.
    Note that the y-axis starts at value $0.75$.
    (a)~Results for $T=100$ timestamps, $P(error)=0.2$, and $2-8$ user trajectories per ant.
    (b)~Results for $k=20$ ants, $P(error)=0.2$, and $2-8$ user trajectories per ant.}
    \label{fig:generated}
\end{figure}

Running times are shown in Fig.~\ref{fig:generated-runtime}.
As expected, the greedy algorithm, with complexity dependent
linearly on the size of the graph, finishes in under few milliseconds.
The ILP is also relatively quick on the real-world dataset, computing the optimal solution within a minute.
On synthetic data and more erroneous real-world data
the
\alg{ILP} approach is applicable only when the number of ants is small, e.g., $k<25$. For larger values of $k$, the computation of optimal disjoint paths takes hours. On the other hand, \alg{LP+Rounding} is fast: the real-world
instances with $k=50$ ants and $T=100$ timestamps are processed within a minute. \alg{Greedy} takes only $2-3$ seconds on the largest instances and may be used in an online fashion. We conclude that the running times of all of our algorithms (except \alg{ILP}) are practical.

\section{Conclusions and Future Work}
\label{sect:open}
We described a system for computing consensus trajectories
from a large number of  input trajectories,
contributed by untrained citizen scientists.
We proposed a new global approach for computing consensus
trajectories and
experimentally demonstrated its effectiveness.
In particular, the global approach outperforms the state-of-the-art in
computer vision tools, even in their most advantageous setting (high
resolution video, sparse ant colony, individually painted ants).
In reality, there are hundreds of thousands of hours of video in settings
that are much more difficult for the computer vision tools and where
we expect our citizen science approach to compare even more
favorably.

A great deal of challenging problems remain. Arguably, the best method
would be to track ``easy ants'' and/or ``easy trajectory segments''
automatically, while asking citizen scientists to solve the hard ants
and hard ant trajectory segments. In such a scenario, every input
trajectory will describe a part of the complete ant trajectory, which
would require stitching together many short
pieces of overlapping trajectories.

\begin{figure}[t]
    \center
	\subfigure[]{
    \includegraphics[width=0.48\textwidth]{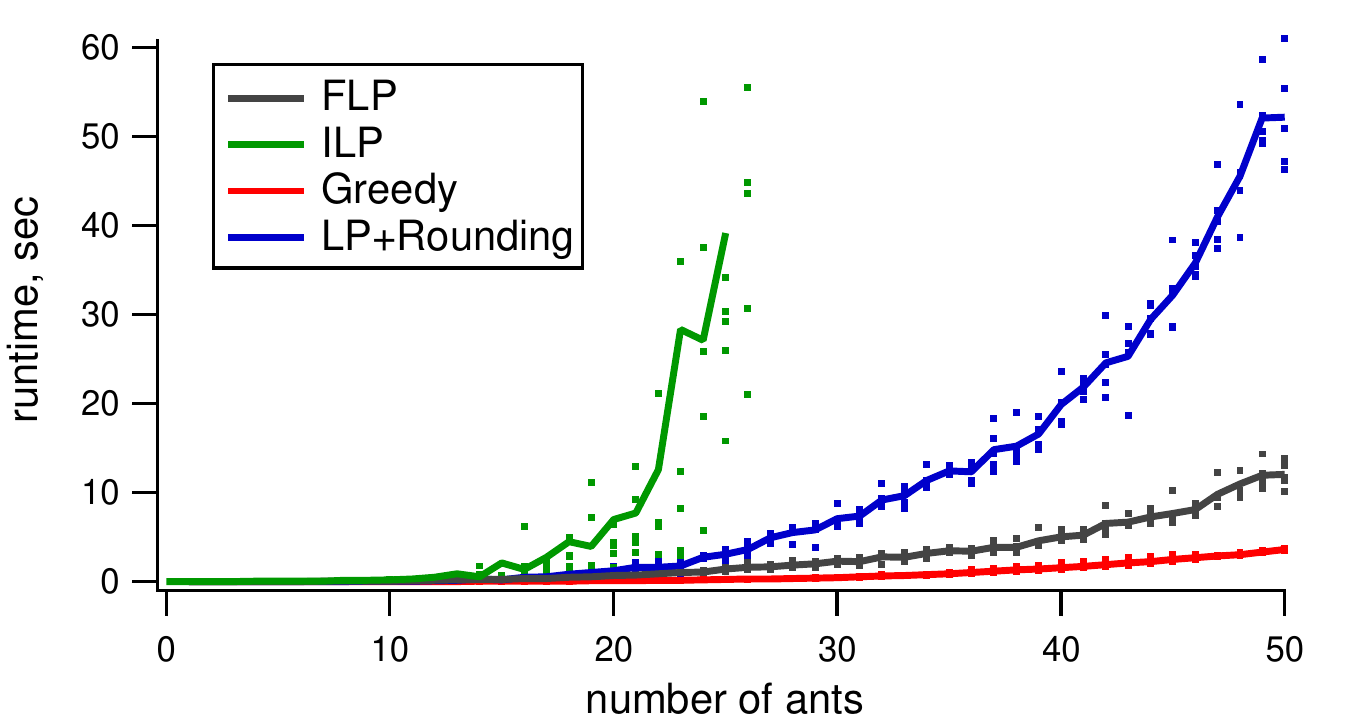}}
\hfill
	\subfigure[]{
    \includegraphics[width=0.48\textwidth]{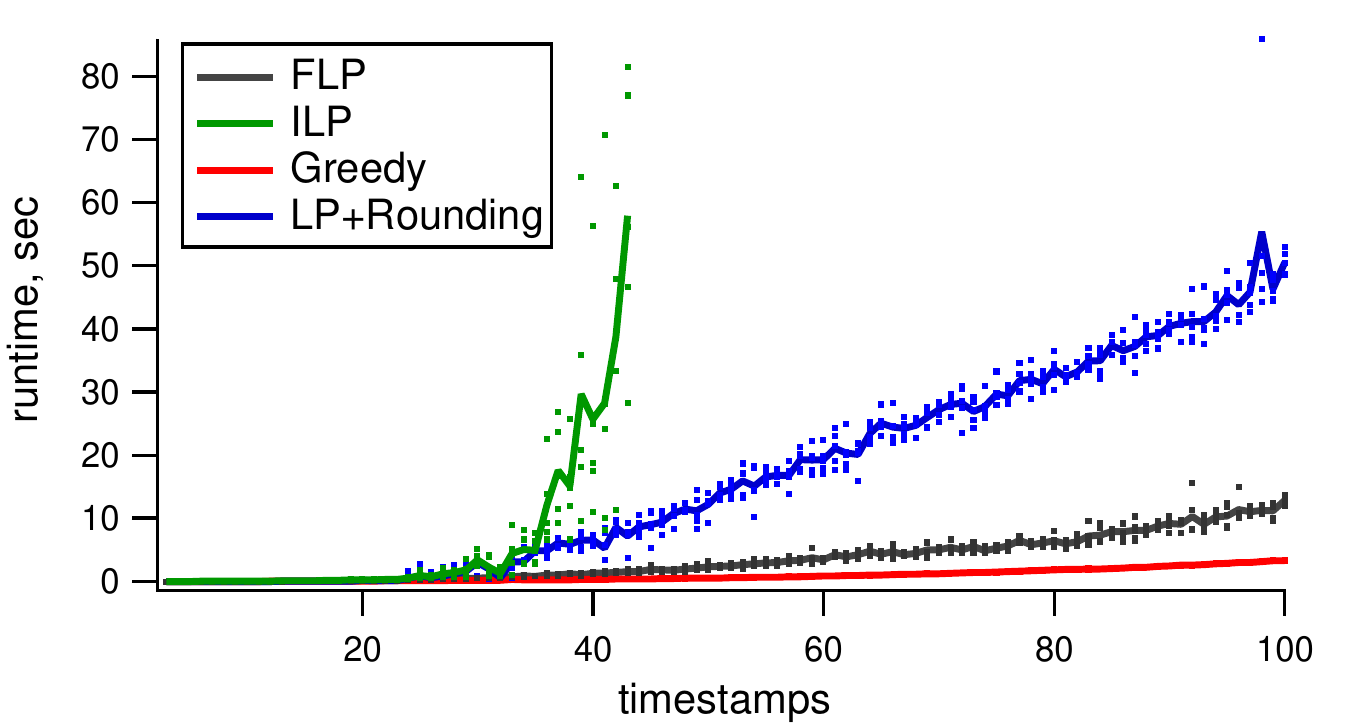}}
    \caption{Running time of the algorithms with $P(error)=0.2$ and $2-8$ trajectories per ant.
    The results for single instances are depicted by dots, while solid lines represent average values over $5$ runs
    for a given number of ants/timestamps.
    (a)~Results for $T=100$ timestamps and various number of ants.
    (b)~Results for $k=50$ ants and various length of a video.}
    \label{fig:generated-runtime}
\end{figure}

\medskip\noindent {\bf Acknowledgments.}
We thank the A.~Dornhaus lab for introducing us to the
problem, and M.~Shin and T.~Fasciano for automated solutions and ground truth. We thank A.~Das, A.~Efrat, F.~Brandenburg, K.~Buchin, M.~Buchin, J.~Gudmundsson, K.~Mehlhorn, C.~Scheideler, M.~van
Kreveld, and C.~Wenk for discussions.
Finally, we thank J.~Chen, R.~Compton, Y.~Huang, Z.~Shi, and
Y.~Xu for help with the game development.

\bibliographystyle{splncs03}
\bibliography{refs}

\end{document}